\documentclass{article}
\usepackage{amsmath}
\usepackage{amsfonts}
\usepackage{amsthm}
\usepackage{tikz-cd}
\usepackage{tikz-feynman}

\newtheorem{theorem}{Theorem}[section]
\newtheorem{definition}[theorem]{Definition}
\newtheorem{proposition}[theorem]{Proposition}
\newtheorem{example}[theorem]{Example}
\newtheorem{lemma}[theorem]{Lemma}
\newtheorem{remark}[theorem]{Remark}

\title{An Algebraic Proof of Gauge-Fixing Independence}
\author{Jiuhe Liu}
\date{\today}
\begin{document}
\maketitle

\begin{abstract}
    This paper provides an algebraic proof of the gauge-fixing independence of
    perturbative asymptotic expansions in a finite-dimensional model of
    non-Abelian gauge theory.
    We develop an algebraic framework using the BRST formalism and perturbation theory.
    Our main theorem shows that under appropriate transversality conditions,
    normalized expectation values of observables remain unchanged under smooth
    deformations of the gauge-fixing conditions that preserve these assumptions.
\end{abstract}

\tableofcontents

\section{Introduction}\label{Introduction}

Perturbation theory and the semiclassical expansion are among the principal
tools used to interpret path integrals in physics. Feynman used perturbation
theory to develop the diagrammatic expansion for
QED~\cite{PhysRev.76.769}.
When this method is applied to non-Abelian gauge
theory, the nonlinear gauge orbit becomes a problem.
Faddeev and Popov~\cite{FADDEEV196729}
extended this method to non-Abelian gauge theories.
They defined Feynman diagrams for Yang--Mills theory
by introducing ghost Fermionic fields.
See also~\cite{PhysRev.162.1195,PhysRev.175.1580,PhysRevD.2.2841}.

This method uses a choice of a local section transverse to the gauge-group
orbits, known as a gauge condition.
Usually, the gauge condition is chosen as in 
QED by setting the divergence of the gauge field to be zero.
(This is the Lorenz gauge condition.)
Gauge-fixing independence is standard in the BRST formalism
\cite{HenneauxTeitelboim1992}, but it is not manifest term by term in the
perturbative expansion. Related finite-dimensional formulations of
gauge-fixing independence also occur in the Batalin--Vilkovisky formalism
\cite{Schwarz1993}. The purpose of this paper is to give a self-contained
algebraic argument in a finite-dimensional Faddeev--Popov--BRST model.

We prove gauge-fixing independence of the Faddeev--Popov method in a
finite-dimensional model of gauge theory.

Because the Feynman-diagram expansion is algebraic in nature, we use the BRST
formulation introduced by Becchi, Rouet, and Stora~\cite{BECCHI1976287} and by
Tyutin~\cite{tyutin2008gaugeinvariancefieldtheory}.

Thus, this paper gives a purely algebraic proof of gauge-fixing independence
for Faddeev--Popov--BRST perturbation theory in a finite-dimensional model.
Our notation follows~\cite{Reshetikhin_2010}.
The main tool of this proof is the Wick theorem.
The same identities motivate the usual formal perturbative argument in field
theory. We do not claim here to construct or justify an infinite-dimensional
path-integral measure.
Moreover, the local-section hypothesis used below deliberately excludes global
gauge-fixing obstructions such as Gribov copies~\cite{Gribov1978}.

The paper is organized as follows.
Section~\ref{Ideas of Asymptotic Series of the Integral}
contains a short overview of the FP method applied in a finite dimensional setting.
In Section~\ref{Basic Definitions and Notations}, we define the algebraic framework.
In Section~\ref{Symmetries and Algebraic Tools}, we
derive some useful identities. 
Section~\ref{Main Theorem} contains the main theorem and its proof.
Section~\ref{Example and Feynman Diagram Interpretation} contains an example
of the application of the main theorem.
Section~\ref{Appendix : Independence of the Choice of Coordinates} contains the
proof of the independence of the choice of coordinates.
Section~\ref{Appendix: Integral Interpretation of the Lemmas and Proofs} contains the
integration interpretation of the lemmas and proofs.

As an acknowledgement, I would like to thank my advisor, Professor Reshetikhin,
for his guidance and encouragement. He also gave me much helpful advice that made
this paper readable.

\section{Ideas of Asymptotic Series of the Integral}\label{Ideas of Asymptotic Series of the Integral}

The following integral is a finite dimensional analog 
of a path integral in a gauge invariant quantum field theory.

\begin{gather}\label{Action integral}
    I_h = \int_M \exp\!\left(\frac{i}{h}f(x)\right)g(x)\,dx
\end{gather}

Here $M$ is a finite dimensional manifold with an action 
of a finite dimensional compact Lie group $G$,
$f(x)$ is a $G$-invariant function on $M$,
and $g(x)\,dx$ is a $G$-invariant density. We assume that the integral is
well-defined (for example, by a compact-support hypothesis), that the action is
locally free near the critical orbit under consideration, and that $f$ has a
single nondegenerate critical orbit in that neighborhood.

We choose a basis $\{e_a\}$ of the Lie algebra $\mathfrak{g}$ of $G$
and local coordinates $x^i$ on $M$.
The Lie algebra action produces a family of vector fields $D_a = D_a^i\frac{\partial }{\partial x^i}$.
The local Faddeev--Popov formula~\cite{FADDEEV196729} (see
also~\cite{Reshetikhin_2010}) represents $I_h$, up to convention-dependent
normalizing constants in the Fourier and Berezin measures, by

\begin{gather}\label{FP integral}
    |G|\int \exp\!\left(\frac{i}{h}f(x) + c^a L_a^b(x)\bar{c}_b
    + \frac{i}{h}\lambda_a\phi^a(x)\right)g(x)\,dx\,dc\,d\bar{c}\,d\lambda .
\end{gather}

Here $|G|$ is the volume of $G$ with respect to the Haar measure,
$\phi^a(x)$ are functions on $M$ whose common zero set defines a local section
of $M\to M/G$ near the chosen critical orbit, and $L_a^b(x)$ is given by

\begin{gather*}
    L_a^b(x) =D_a^k(x)\partial_k\phi^b(x)
\end{gather*}

The assumption that the subspace $\{\phi^a(x) = 0\}$
is a local section of $M\to M/G$ implies that the matrix $(L_a^b)$
is nondegenerate.

To calculate the asymptotic expansion of~\eqref{FP integral} as
$h \to 0$, one can first find the critical point of $f(x) + \lambda_a\phi^a(x)$
(regarded as a function of $x$ and $\lambda$), assume that its Hessian is
nondegenerate, and then apply the stationary-phase method at each critical
point. For simplicity, we will only consider the case of one critical point in
this paper.

This algorithm is standard and is governed by Wick's theorem~\cite{Wick1950};
details can be found, for example, in~\cite{Reshetikhin_2010}. The
mathematical foundation of the algorithm can be found in
\cite{Grigis_Sjöstrand_1994}.

In gauge theory, the functions $\phi^a(x)$ are called gauge-fixing functions.
Their derivatives appear explicitly in the perturbative expansion.
Nevertheless, the original integral~\eqref{Action integral} does not depend on
$\phi^a(x)$. We prove the corresponding statement as an algebraic identity for
the normalized formal expansion.

\section{Basic Definitions and Notations}\label{Basic Definitions and Notations}
For a rigorous algebraic proof, we first need a precise definition of the Wick
expansion. The algorithm evaluates each monomial (equivalently, each Feynman
diagram) and then sums the results. A priori, infinitely many terms could
contribute to the coefficient of a fixed power $h^k$. We therefore do not
define the algebraic expectation on the unrestricted space generated by
$h,h^{-1},X^k,c^a,\bar c_a$. Instead, we use a completion $R$ with a finiteness
condition ensuring that every coefficient obtained after Wick contraction is
a finite sum. This completion is an inverse limit of the modules $G^k$ defined
below. It is a ring, and the exponential is defined on an appropriate filtered
subspace.

Gauge-fixing independence will be proved infinitesimally. We consider a smooth
family $\phi_t$ and require every Taylor coefficient of $\phi_t$ at $x=0$ to
depend smoothly on $t$. Accordingly, the domain of the expectation is a module
over a coefficient ring $A$, rather than merely a complex vector space. The
main application is $A=C^\infty(I,\mathbb C)$, where $I$ is an interval with
coordinate $t$. The derivation $\partial/\partial t$ then extends
coefficientwise to the modules used below.

For the constructions below, however, we initially assume only that $A$ is a commutative ring
over $\mathbb{C}$.

\begin{definition}
    For $k \in \mathbb{Z}, n,m \in \mathbb{N}_+$, we consider an $A$-module
    \begin{gather*}
        h^k A[h^{-1},X^1,\dots,X^{n+m}]\otimes_A \wedge^* A\{c^1,\dots,c^m,\bar{c}_1,\dots,\bar{c}_m\}
    \end{gather*}
    We assign degree $2$ to $h$, degree $1$ to each of
    $X^1,\dots,X^{n+m}$, and degree $0$ to the ghost variables.

    We define $G^{n+m,m,k}$ to be the submodule of $h^k
        A[h^{-1},X^1,\dots,X^{n+m}]\otimes_A \wedge^*
        A\{c^1,\dots,c^m,\bar{c}_1,\dots,\bar{c}_m\}$ generated by the monomials of
    degree not greater than $k$.
\end{definition}

For simplicity, if $n,m$ are given in the context, we use $G^k$ rather than
$G^{n+m,m,k}$.

\begin{definition}
    Let $G^{k+1}\to G^k$ be the projection that sends every monomial of degree
    $k+1$ to zero. Define
    \begin{gather*}
        R := \varprojlim_{k}G^k
    \end{gather*}

    We define the submodule $F_l \subseteq R$:
    \begin{gather*}
        F_l := \ker(R \to G^{l-1}).
    \end{gather*}

\end{definition}

Intuitively, $F_l$ consists of series whose terms have degree at least $l$.

It is easy to check the following proposition on $R$:

\begin{proposition}
    $R$ has a supercommutative ring structure.
\end{proposition}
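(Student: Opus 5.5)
Since $R$ is defined as an inverse limit of $A$-modules, the plan is to realize $R$ concretely as a module of formal series and then define the product termwise, checking that each coefficient of the product is a finite sum. Every monomial has the form $h^j X^\alpha \xi$ with $\xi$ a ghost monomial, and its degree is $2j+|\alpha|$, so $j$ is bounded below by $-\infty$ but the degree can be any integer. An element of $G^k$ is a finite $A$-combination of monomials of degree $\le k$ satisfying $j\le k$; note that the factor $h^k A[h^{-1},\dots]$ allows only powers $h^j$ with $j\le k$. The transition maps kill the top degree. Thus I will identify $R$ with the module of formal sums $\sum_{d} r_d$, where $r_d$ is a finite combination of monomials of degree exactly $d$ and $r_d$ lies in the image of $G^d$. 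For each fixed degree $d$ there are infinitely many monomials, since $h^{j}$ with $j\to-\infty$ can be compensated by large $|\alpha|$. However, membership in $G^d$ forces $r_d$ to be a finite sum. So an element of $R$ is a sequence $(r_d)_{d\in\mathbb Z}$ compatible under projections, each $r_d$ a finite sum, and $r_d$ is homogeneous of degree $d$ in the components where it is nonzero.

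The key step is to check that the product of homogeneous pieces has finite support in each degree. Write $a=\sum_d a_d$ and $b=\sum_e b_e$, and set $(ab)_n=\sum_{d+e=n} a_d b_e$, using the supercommutative product on $A[h^{\pm1},X]\otimes\wedge^*$, where $X$ and $h$ are even and the ghosts are odd. The crux is that this sum is finite. I expect this to be the main obstacle, because the degree $d$ can a priori be negative: for example $h^{-1}$ has degree $-2$. I would handle it by showing that degrees are bounded below for each element. The compatibility condition $a\in\varprojlim G^k$ involves all $k$, including very negative $k$, and $G^k$ for $k$ negative consists of monomials of degree $\le k$ with $h$-power $\le k$. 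If the inverse system is only over $k\ge0$ (or the projections $G^{k+1}\to G^k$ are simply truncations), then the image of $a$ in $G^0$ is a finite sum, so only finitely many monomials of nonpositive degree occur. Hence $a_d=0$ for $d$ below some bound $-N_a$, and likewise $b_e=0$ for $e<-N_b$. For fixed $n$ the pairs $(d,e)$ with $d+e=n$ then satisfy $-N_a\le d\le n+N_b$, so the sum is finite.

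It remains to check that $(ab)_n$ has degree exactly $n$ and that its $h$-power condition places the truncation of $ab$ at level $k$ in $G^k$. The $h$-power bound is not automatic under multiplication. I would first try to absorb it by observing that $h^j$ with $2j\le$ degree and $|\alpha|\ge0$ gives $j\le \deg/2\le k$ when $\deg\le k$ and $k\ge0$, so for nonnegative $k$ the condition is implied by the degree condition. This shows the truncations lie in $G^k$ and are compatible, so $ab\in R$. Associativity, distributivity, $A$-bilinearity, the unit $1\in G^0$, and the supercommutativity $ab=(-1)^{|a||b|}ba$ for ghost-homogeneous elements are then checked degreewise, because they hold for the polynomial superalgebra and each degree component involves only finitely many terms. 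As a byproduct, the argument gives $F_lF_{l'}\subseteq F_{l+l'}$, which will be needed later to define the exponential.
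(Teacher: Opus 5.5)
Your argument is correct. The paper gives no proof of this proposition; it only says the properties of $R$ are ``easy to check''. So there is no competing route to compare: your argument is exactly the verification left to the reader. You identify $R$ with series $\sum_{d\ge -N} r_d$ of finite homogeneous pieces, where the lower bound $-N$ comes from the finite $G^0$ component, and you multiply by the degreewise Cauchy product.

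Your caution about negative $k$ is justified. For $k\le -2$, the monomial $h^{k+1}\in G^{k+1}$ has degree $2k+2\le k$. Yet it does not lie in $h^{k}A[h^{-1},X^1,\dots,X^{n+m}]\otimes_A\wedge^* A\{c^1,\dots,c^m,\bar{c}_1,\dots,\bar{c}_m\}$. So the stated transition maps are not defined there, and restricting to $k\ge 0$ is the sensible reading.

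The choice does not affect the result. If you imposed the constraints for all negative $k$ literally, $R$ would be cut down to the series all of whose monomials $h^pX^\alpha\xi$ satisfy $p+|\alpha|\ge 0$. This condition is additive under multiplication and is met by $1$, $\frac{i}{h}S_{int}$ and $g$. Your Cauchy-product argument therefore makes that smaller module a supercommutative ring as well.
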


\begin{proposition}
    For $\alpha \in F_1$, $\exp(\alpha)$ is a well-defined element in $R$.
\end{proposition}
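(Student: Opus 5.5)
Define $\exp(\alpha)$ as the formal series $\sum_{j\ge 0}\alpha^j/j!$ and show that it converges in the inverse-limit topology of $R$: for each $k$, only finitely many terms have a nonzero image in $G^k$. The argument rests on the ring structure of $R$ from the previous proposition together with one filtration estimate, $F_l\cdot F_{l'}\subseteq F_{l+l'}$.

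\emph{Filtration estimate.} An element of $F_l$ is represented in each $G^k$ by a finite $A$-linear combination of monomials of degree at least $l$. Degree is additive under multiplication, with $h$ of degree $2$, each $X^i$ of degree $1$, $h^{-1}$ of degree $-2$ and ghosts of degree $0$, and the supercommutative sign rules never change degrees. Hence a product of two such representatives has all terms of degree at least $l+l'$, and passing to the limit gives the estimate. The one point to check is that $h^{-1}$ does not break this. It does not, because $F_l$ is defined through the total degree, not through the power of $h$, so negative $h$-exponents are already accounted for.

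\emph{Convergence.} Iterating the estimate gives $\alpha^j\in F_j$. Therefore the image of $\alpha^j$ in $G^k$ vanishes for every $j>k$. The $k$-th component of the series is then the finite sum
\[
\exp(\alpha)_k := \sum_{j=0}^{k}\frac{1}{j!}\,\pi_k(\alpha^j),
\]
where $\pi_k\colon R\to G^k$ is the canonical projection. This is a well-defined element of $G^k$, since $1/j!\in\mathbb C\subseteq A$.

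\emph{Compatibility.} Under the projection $G^{k+1}\to G^k$, the element $\exp(\alpha)_{k+1}$ maps to $\sum_{j\le k+1}\pi_k(\alpha^j)/j!$. The $j=k+1$ term is zero because $\alpha^{k+1}\in F_{k+1}$, so the image equals $\exp(\alpha)_k$. The family $(\exp(\alpha)_k)_k$ is therefore a compatible system and defines an element of $R$.

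\emph{Parity of $\alpha$.} If $\alpha$ is even, the usual identities such as $\exp(\alpha+\beta)=\exp(\alpha)\exp(\beta)$ for commuting even elements follow termwise from the binomial theorem, truncated at each level $k$. For general $\alpha$ the series is still defined, because the argument above never uses commutativity. If $\alpha$ has an odd part, nilpotency in the exterior algebra only helps.

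The main obstacle is making sure the degree filtration is really compatible with multiplication in the inverse limit: the product on $R$ must be computed levelwise, and truncation must commute with products modulo $F_{k+1}$. Once that is verified, which amounts to $\ker(R\to G^k)$ being an ideal, the rest is routine.
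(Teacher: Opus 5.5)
Your argument is correct and is the standard one the paper leaves as ``easy to check'': $F_lF_{l'}\subseteq F_{l+l'}$ gives $\alpha^j\in F_j$, so each truncation of $\sum_j\alpha^j/j!$ is a finite sum, and these truncations are compatible across levels. Your closing remark is wrong, though: because $h^{-1}$ has degree $-2$, $R$ contains negative-degree elements such as $h^{-1}X^1$, so $\ker(R\to G^k)=F_{k+1}$ is not an ideal of $R$ and products are not computed levelwise on all of $R$ (e.g.\ $(X^1)^5\in F_5$ but $(X^1)^5\cdot h^{-1}X^1\notin F_5$); both properties do hold on the subring $F_0$, which is all you need since $\alpha$ and its powers lie there, and the estimate $F_lF_{l'}\subseteq F_{l+l'}$ that your proof actually uses holds for all integers $l,l'$.
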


\begin{proposition}
    $\partial_k = \frac{\partial }{\partial X^k}$ is well defined on $R$.
\end{proposition}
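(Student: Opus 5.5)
The plan is to define $\partial_k$ level by level on the truncations $G^k$ and then show that these definitions are compatible with the projections, so that they pass to the inverse limit $R=\varprojlim_k G^k$.

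\emph{Step 1: define $\partial_k$ on each $G^j$.} On the ambient module $h^jA[h^{-1},X^1,\dots,X^{n+m}]\otimes_A\wedge^*A\{c,\bar c\}$, let $\partial_k$ be the $A$-linear derivation that differentiates the polynomial factor in $X^k$. It treats $h$, $h^{-1}$ and the ghost variables as constants. Since the $X$'s are even, no signs arise, so $\partial_k$ is an even derivation of the supercommutative structure (the one asserted to exist on $R$). By the chosen grading ($\deg h=2$, $\deg X^i=1$, ghosts of degree $0$), $\partial_k$ sends a monomial of degree $d$ either to zero or to a scalar multiple of a monomial of degree $d-1$. In particular it maps the submodule spanned by monomials of degree $\le j$ into itself, so we obtain $A$-linear maps $\partial_k^{(j)}:G^j\to G^j$.

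\emph{Step 2: compatibility with the projections.} Let $\pi_j:G^{j+1}\to G^j$ be the projection killing monomials of degree $j+1$. We must check that $\pi_j\circ\partial_k^{(j+1)}=\partial_k^{(j)}\circ\pi_j$, and this is the one step that needs care. Take a monomial $\mu\in G^{j+1}$.
\begin{itemize}
\item If $\deg\mu\le j$, then $\pi_j\mu=\mu$ and $\partial_k\mu$ has degree $\le j-1$, so both sides equal $\partial_k\mu$.
\item If $\deg\mu=j+1$, the right side is $0$. The left side is $\pi_j$ applied to a term of degree $j$, which is $\partial_k\mu$ and need not vanish.
\end{itemize}
So the naive maps do \emph{not} commute: $\partial_k$ lowers degree, and information at level $j+1$ leaks into level $j$. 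I expect this to be the main obstacle.

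\emph{Step 3: fix the target by shifting levels.} Since $\partial_k$ lowers degree by exactly one, I would instead use the maps
\[
\tilde\partial_k^{(j)}:G^{j+1}\xrightarrow{\ \partial_k\ }G^j .
\]
For these the compatibility $\pi_{j-1}\circ\tilde\partial_k^{(j)}=\tilde\partial_k^{(j-1)}\circ\pi_j$ holds. Indeed, a monomial of degree $j+1$ is killed by $\pi_j$ on the right, and on the left its image has degree $j$ and is killed by $\pi_{j-1}$. Monomials of lower degree are treated identically on both sides. Hence, for $r=(r_j)_j\in R$, the assignment $(\partial_k r)_j:=\partial_k(r_{j+1})$ is a compatible family, and so defines an element of $R$. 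Equivalently, the $j$-th component of $\partial_k r$ depends only on the component of $r$ at level $j+1$.

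The resulting map is $A$-linear and satisfies the Leibniz rule with respect to the ring structure on $R$. Leibniz holds at each finite level, up to the truncation, because products of monomials add degrees. It also maps $F_l$ into $F_{l-1}$. Finally, each degree-$d$ level of an element of $R$ involves only finitely many monomials, so $\partial_k$ acts coefficientwise. It therefore commutes with any derivation of $A$, such as $\partial/\partial t$, as will be needed later.
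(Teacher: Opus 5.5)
Your proposal is correct and is essentially the construction the paper leaves implicit: the paper only says the statement is easy to check, and in the proof of Proposition~\ref{identityofpartialk} it uses exactly your level-shifted maps, writing $\partial_i^{(k)}\theta$ inside $\langle\cdot\rangle_{k-1,B,P}$. One small caveat comes from the paper's definition of $G^k$ rather than from your argument. The prefactor $h^k$ imposes an extra bound on the power of $h$, and that bound is automatic only for $k\ge 0$. So your maps $G^{j+1}\to G^j$, like the paper's own projections, should be taken over the cofinal range $j\ge 0$, which gives the same inverse limit.
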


\begin{proposition}
    For any $\alpha ,\beta \in R , \theta \in F_1$,
    \begin{gather*}
        \partial_i(\alpha \beta) = \partial_i \alpha \beta + \alpha \partial_i \beta\\
        \partial_i\exp(\theta) = \exp(\theta)\partial_i \theta
    \end{gather*}
\end{proposition}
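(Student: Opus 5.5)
The plan is to reduce both identities to the finite levels $G^k$ and then pass to the inverse limit. The main point is that $\partial_i$ lowers the degree of every monomial by exactly one: it kills $X^i$-free monomials and sends a monomial of degree $d$ to one of degree $d-1$. So $\partial_i$ maps $G^{k+1}$ to $G^{k}$, and $\partial_i$ on $R$ is defined componentwise from these maps, as in the previous proposition.

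\textbf{The Leibniz rule.} First verify it on monomials in $h^{j}A[h^{-1},X]\otimes_A\wedge^*A\{c,\bar c\}$. The variables $X^i$ are even and $\partial_i$ does not touch ghosts or $h$, so there are no sign issues. The identity is then the ordinary product rule for polynomials, tensored with the identity on the exterior algebra. Extend by $A$-bilinearity to finite sums.

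For general $\alpha,\beta\in R$, compare both sides modulo $F_l$ for arbitrary $l$. The product in $R$ is defined levelwise: the degree $<l$ part of $\alpha\beta$ depends only on finitely many terms of $\alpha$ and $\beta$. This uses the finiteness built into $G^k$: a fixed degree contains only finitely many monomials modulo $A$-coefficients, even with $h^{-1}$ allowed, since $h$ has degree $2$ and the bound is $h^k$. Hence the degree $<l$ part of each side of the Leibniz identity is computed from truncations, where the polynomial case applies. An element of $R$ that vanishes modulo every $F_l$ is zero, so the identity holds in $R$.

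\textbf{The exponential.} Write $\exp(\theta)=\sum_{n\ge 0}\theta^n/n!$. This converges in $R$ because $\theta^n\in F_n$, which is how the earlier proposition on the exponential is proved. By induction, the Leibniz rule gives $\partial_i(\theta^n)=n\,\theta^{n-1}\partial_i\theta$. This uses that $\theta$ and $\partial_i\theta$ commute; that holds if $\theta$ is even, as in all applications, where $\theta$ is the even action exponent. I will state this parity assumption explicitly, since for odd $\theta$ the exponential is $1+\theta$ and the formula still holds trivially.

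Next I must interchange $\partial_i$ with the infinite sum. This is the main obstacle, since $\partial_i$ lowers degree and so is not continuous in the naive sense. It maps $F_l$ into $F_{l-1}$, so $\partial_i\bigl(\sum_{n\ge N}\theta^n/n!\bigr)\in F_{N-1}$. Therefore, modulo $F_l$, only finitely many terms matter on both sides, and the finite computation gives
\begin{gather*}
\partial_i\exp(\theta)\equiv\sum_{n<l+1}\frac{\theta^{n-1}}{(n-1)!}\partial_i\theta\equiv\exp(\theta)\partial_i\theta \pmod{F_l}.
\end{gather*}
Since $l$ is arbitrary, the identity holds in $R$.
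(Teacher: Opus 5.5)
The paper gives no proof here; it only says these facts are ``easy to check.'' Your truncation argument is the natural one and is correct for even $\theta$: the polynomial product rule on monomials, $\partial_i F_l\subseteq F_{l-1}$, and comparison modulo every $F_l$. Two of your side claims are false, though.

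First, a fixed degree does \emph{not} contain only finitely many monomials. The monomials $h^{-p}(X^1)^{2p}$, $p=0,1,2,\dots$, all have degree $0$ and all lie in $G^0$. This is exactly why the paper passes to the completion $R$ rather than working with all monomials in $h^{\pm1},X,c,\bar c$. What you actually need is what makes the product on $R$ well defined. Every $\alpha\in R$ has degrees bounded below by some $d_0$, because its image in any fixed $G^k$ is a finite sum. Each homogeneous component of $\alpha$ is also a finite sum. So if $\beta$ has degrees $\geq e_0$, then $\alpha\beta \bmod F_l$ depends only on $\alpha \bmod F_{l-e_0}$ and $\beta \bmod F_{l-d_0}$, which are finite sums. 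With that justification in place, your reduction of the Leibniz rule to the polynomial case goes through.

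Second, the exponential formula does not hold ``trivially'' for odd $\theta$; it fails. Take $\theta=X^1c^1+X^2\bar c_1\in F_1$. Then $\theta^2=X^1X^2(c^1\bar c_1+\bar c_1c^1)=0$, so $\exp(\theta)=1+\theta$ and $\partial_1\exp(\theta)=c^1$. On the other hand, $\exp(\theta)\,\partial_1\theta=c^1+X^2\bar c_1c^1\neq c^1$. So the proposition as literally stated, for all $\theta\in F_1$, is false. Your evenness hypothesis is genuinely necessary, not a convenience; more generally it suffices that $\theta$ commutes with $\partial_i\theta$. Drop the remark about odd $\theta$ and state the exponential identity for even $\theta$. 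That covers every use in the paper, since $\frac{i}{h}S_{int}$ is even.
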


Motivated by a finite-dimensional Gaussian integral, we define the free
expectation by Wick's rule. Formally, it corresponds to the following
``identity'':
\begin{gather*}
    \langle g \rangle_{B,P} = \frac{\int e^{\frac{i}{h}\frac{1}{2}B_{ij} X^i X^j + P_a^b c^a\bar{c}_b} g dX^1...dX^{n+m} dc^1...dc^m d\bar{c}_1...d\bar{c}_m}{\int e^{\frac{i}{h}\frac{1}{2}B_{ij} X^i X^j + P_a^b c^a\bar{c}_b}  dX^1...dX^{n+m} dc^1...dc^m d\bar{c}_1...d\bar{c}_m}
\end{gather*}

To avoid the problem of convergence, we will not use the integral directly, but
the explicit formula of the result.
\begin{definition} (Wick theorem)

    Suppose $(B_{ij})$ is a symmetric matrix in $GL_{n+m}(A)$ and $(P_a^b)$ is a
    matrix in $GL_m(A)$. We define an $A$-module homomorphism (not a ring
    homomorphism):
    \begin{gather*}
        \langle \cdot \rangle_{k,B,P} : G^k \to h^{[k/2]}A[h^{-1}]
    \end{gather*}

    For any monomial in $G^k$, we write it in the following form:
    \begin{gather*}
        \theta = h^{\alpha}(X^1)^{\alpha_1}\dots(X^{n+m})^{\alpha_{n+m}} (c^1)^{\epsilon_1}\dots(c^m)^{\epsilon_m}(\bar{c}_1)^{\eta_1}\dots( \bar{c}_m)^{\eta_m}
    \end{gather*}

    Here $\epsilon_i,\eta_i \in \{0,1\}$. If $\sum \alpha_i$ is odd, or $\sum
        \epsilon_i \ne \sum \eta_i$, then we define $\langle \theta \rangle_{k,B,P} =
        0$.

    Otherwise we write $\theta$ in the following form:

    \begin{gather*}
        \theta = h^{\alpha}X^{s_1}X^{s_2}\dots X^{s_N}
        c^{t_1}c^{t_2}\dots c^{t_M}\bar{c}_{u_1}\bar{c}_{u_2}\dots\bar{c}_{u_M}
    \end{gather*}

    Here we allow $s_i = s_j$ for $i \ne j$, and restrict $t_1 < t_2 <\dots<t_M,
        u_1 < u_2 < \dots < u_M $. We define

    \begin{gather*}
        \langle \theta \rangle_{k,B,P} = h^\alpha (ih)^{N/2}
        \left(
        \sum_{\pi}
        (B^{-1})^{s_{\pi(1)}s_{\pi(2)}}
        \cdots
        (B^{-1})^{s_{\pi(N-1)}s_{\pi(N)}}
        \right)
        \\
        \qquad \times (-1)^{M(M-1)/2}
        \left(
        \sum_{\sigma}
        \text{sign}(\sigma)
        (P^{-1})^{t_1}_{u_{\sigma(1)}}
        (P^{-1})^{t_2}_{u_{\sigma(2)}}
        \cdots
        (P^{-1})^{t_M}_{u_{\sigma(M)}}
        \right)
    \end{gather*}

    Here $\sigma$ runs over all permutations of $(1,2,\dots,M)$, and
    $\operatorname{sign}(\sigma)=(-1)^{\operatorname{inv}(\sigma)}$. The symbol
    $\pi$ runs over all pairings of $(1,2,\dots,N)$; a pairing is a partition of
    $\{1,2,\dots,N\}$ into two-element subsets.

    Since $\theta\in G^k$, we have $2\alpha+N\le k$, and hence
    $\alpha+N/2\le\lfloor k/2\rfloor$. Thus the right-hand side belongs to
    $h^{\lfloor k/2\rfloor}A[h^{-1}]$.

    By definition, there is a commutative diagram:

    \[\begin{tikzcd}
            {G^{k+1}} & {h^{[(k+1)/2]}A[h^{-1}]} \\
            {G^{k}} & {h^{[k/2]}A[h^{-1}]}
            \arrow[from=1-1, to=1-2]
            \arrow[from=1-1, to=2-1]
            \arrow[from=1-2, to=2-2]
            \arrow[from=2-1, to=2-2]
        \end{tikzcd}\]

    Therefore it induces an $A$-module homomorphism $\langle \cdot \rangle_{B,P} : R \to
        A[[h]][h^{-1}]$.

\end{definition}

\begin{remark}
    Although $\langle\cdot\rangle_{B,P}$ is motivated by integration,
    coordinate independence is not immediate from its algebraic definition.
    The proof of this property will be delayed to section \ref{Appendix : Independence of the Choice of Coordinates}.
\end{remark}

The following example translates the stationary-phase algorithm of
Section~\ref{Ideas of Asymptotic Series of the Integral} into this algebraic
setting.

\begin{example}[]\label{exampleoffp1}
    We make the following assumptions and recover the formal expansion of~\eqref{FP integral}:

    \begin{itemize}
        \item $A = \mathbb{C}$.
        \item The local coordinates on $M$ are $x^1,\dots,x^n$. Since $\dim \mathfrak{g}=m$,
              the indices $a,b$ range over $\{1,\dots,m\}$. We also have the variables
              $\lambda_1,...,\lambda_m,c^1,...,c^m,\bar{c}_1,...,\bar{c}_m$.
        \item We work on $R^{n+m,m}$ and identify $X^{i}= x^i$ for $i = 1,\dots,n$, $X^{a+n}
                  = \lambda_a$ for $a = 1,\dots,m$. $c,\bar{c}$ are naturally identified.
        \item The unique critical point of $f(x) + \lambda_a\phi^a(x)$ is at $X = 0$.
        \item Smooth functions on $M$ are replaced by their Taylor series at $X=0$ and
              are therefore regarded as elements of $R$.
        \item The ``action'' $S$ is defined by
              \begin{gather*}
                  \frac{i}{h}S := \frac{i}{h}(f + \lambda_a \phi^a ) + c^a \bar{c}_b L_a^b.
              \end{gather*}
        \item  We consider the first few order expansion of $S$: denote
              \begin{gather*}
                  f + \lambda_a \phi^a =A_0 +  \frac{1}{2}B_{ij}X^i X^j + O(X^3)\\
                  L_a^b := D_a^k\partial_k \phi^b = P_a^b + O(X)
              \end{gather*}
              Here $D_a^k\in A[[x^1,\dots,x^n]]$, $A_0\in A$, $(P_a^b)$ is
              invertible over $A$, and $(B_{ij})$ is symmetric and invertible
              over $A$. We set $D_a^k=0$ for $k=n+1,\dots,n+m$; thus $D_a$
              acts only on the $x$ variables, not on the $\lambda$ variables.
        \item The ``interaction part of the action'' $S_{int}$ is the projection of $S$ to
              $F_3$. In other words, we have:
              \begin{gather*}
                  \frac{i}{h}S_{int} := \frac{i}{h}(f + \lambda_a \phi^a - A_0 - \frac{1}{2}B_{ij}X^i X^j )
                  +c^a\bar{c}_b(L_a^b - P_a^b).
              \end{gather*}
    \end{itemize}
    By counting the degree,
    $\frac{i}{h}S_{int}\in F_1 , g \in F_0$, so
    \begin{gather*}
        \exp(\frac{i}{h}S_{int})
    \end{gather*}
    is well defined in $F_0$, and by definition we can check that
    \begin{gather*}
        \langle \exp(\frac{i}{h}S_{int}) g\rangle_{B,P}
    \end{gather*}
    is just the asymptotic series mentioned in section \ref{Ideas of Asymptotic Series of the Integral}.
    (up to multiplying by a constant).
\end{example}

\section{Symmetries and Algebraic Tools}\label{Symmetries and Algebraic Tools}

From now on, let $\mathfrak g$ be an $m$-dimensional unimodular Lie algebra.
Choose a basis $\{T_1,\dots,T_m\}$, and write the Lie
bracket is given by the structure constants
\begin{gather*}
    [T_a,T_b] = f_{ab}^c T_c
\end{gather*}

Assume that $\mathfrak g$ acts on $A[[X^1,\dots,X^{n+m}]]$ by formal vector fields
$D_a = D_a^k\partial_k$.

\begin{definition}
    A function $f\in A[[X^1,\dots,X^{n+m}]]$ is invariant if $D_af=0$ for every $a$.

    A density $\rho\,dX^1\cdots dX^{n+m}$ is invariant if
    $\rho\in A[[X^1,\dots,X^{n+m}]]$ and, for every $a$,
    \begin{gather*}
        D_a^k\partial_k \rho + \partial_k D_a^k \rho = 0
    \end{gather*}
\end{definition}

\begin{proposition}[]\label{propofexpectation1}
    If $N$ is even, then
    \begin{gather*}
        \langle X^{s_1}\dots X^{s_N}\rangle_{B,P}
        =ih \sum_{l = 2}^N(B^{-1})^{s_1,s_l}\langle X^{s_2}\dots\widehat{X^{s_l}}\dots X^{s_N}\rangle_{B,P}
    \end{gather*}
\end{proposition}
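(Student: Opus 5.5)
The identity follows directly from the Wick-rule definition of $\langle\cdot\rangle_{B,P}$, by splitting the sum over pairings according to the partner of the first index. There are no ghosts and no power of $h$ in the monomial. So for $\theta = X^{s_1}\cdots X^{s_N}$ with $N$ even, the definition gives $M=0$, so the ghost factor is $1$, and
\begin{gather*}
    \langle X^{s_1}\cdots X^{s_N}\rangle_{B,P} = (ih)^{N/2}\sum_{\pi}\prod_{\{p,q\}\in\pi}(B^{-1})^{s_ps_q},
\end{gather*}
where $\pi$ runs over all pairings of $\{1,\dots,N\}$. The product is well defined independent of the order within each pair, because $B$ is symmetric, hence so is $B^{-1}$.

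The key step is a bijection. A pairing $\pi$ of $\{1,\dots,N\}$ is the same as a choice of $l\in\{2,\dots,N\}$, the partner of $1$, together with a pairing $\pi'$ of the remaining $N-2$ elements $\{2,\dots,N\}\setminus\{l\}$. Under this correspondence the product over $\pi$ factors as $(B^{-1})^{s_1s_l}$ times the product over $\pi'$. Summing over $\pi'$ for fixed $l$ gives exactly the pairing sum appearing in the definition of $\langle X^{s_2}\cdots\widehat{X^{s_l}}\cdots X^{s_N}\rangle_{B,P}$. That sum carries the prefactor $(ih)^{(N-2)/2}$, and the leftover factor $(ih)^{N/2}/(ih)^{(N-2)/2}=ih$ gives the stated formula.

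Two bookkeeping points remain. First, the remaining monomial must be presented in the form required by the definition. It is a pure product of $X$'s of even length $N-2$, and relabelling its indices in increasing order of position does not change the set of pairings, so the value is as claimed. Second, the expectation is defined on $R$ through $G^k$, so we should check that the monomial lies in the appropriate $G^k$. This is harmless: both sides are computed by the same explicit formula at any sufficiently large $k$, and the defining diagram guarantees compatibility. The case $N=2$ serves as a sanity check, with $\langle 1\rangle_{B,P}=1$.

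There is no real obstacle. The only thing to do carefully is to state the bijection between pairings of $\{1,\dots,N\}$ and pairs $(l,\pi')$ precisely, and to track the power of $ih$.
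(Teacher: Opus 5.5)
Your proof is correct and is essentially the paper's argument: the paper also groups the pairings of $\{1,\dots,N\}$ by the partner $l$ of $1$ and identifies the remaining pairs with a pairing of $\{2,\dots,\widehat{l},\dots,N\}$. Your extra bookkeeping makes explicit what the paper leaves implicit: the factor $ih=(ih)^{N/2}/(ih)^{(N-2)/2}$, the symmetry of $B^{-1}$, and the order-independence of how the monomial is presented.
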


\begin{proof}
    Group the pairings according to the index $l$ paired with $1$. The remaining
    pairs form a pairing of $\{2,\dots,\widehat l,\dots,N\}$.
\end{proof}

\begin{proposition}[]\label{propofexpectation2}
    \begin{gather*}
        \langle c^{s_1}c^{s_2}\dots c^{s_M}\bar{c}_{t_1}\bar{c}_{t_2}\dots\bar{c}_{t_M} \rangle_{B,P}
        =\sum_{l = 1}^{M}(-1)^{l+M}
        (P^{-1})^{s_l}_{t_1}
        \langle c^{s_1}\dots\widehat{c^{s_l}}\dots c^{s_M}\bar{c}_{t_2}\dots\bar{c}_{t_M} \rangle_{B,P}\\
        =\sum_{l = 1}^{M}(-1)^{l+M}
        (P^{-1})^{s_1}_{t_l}
        \langle c^{s_2}\dots c^{s_M}\bar{c}_{t_1}\dots\widehat{\bar{c}}_{t_l}\dots\bar{c}_{t_M} \rangle_{B,P}
    \end{gather*}
\end{proposition}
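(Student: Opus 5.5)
The plan is to reduce both identities to the Laplace (cofactor) expansion of a determinant. The key preliminary step is to remove the sortedness restriction built into the Wick definition, so that the expectation of a pure ghost monomial becomes a signed determinant for \emph{arbitrary} index sequences.

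First, write $Q:=P^{-1}$. For $t_1<\dots<t_M$ and $u_1<\dots<u_M$ the definition gives directly
\begin{gather*}
    \langle c^{t_1}\cdots c^{t_M}\bar c_{u_1}\cdots\bar c_{u_M}\rangle_{B,P}
    =(-1)^{M(M-1)/2}\det\bigl[Q^{t_i}_{u_j}\bigr]_{i,j=1}^M ,
\end{gather*}
since there are no $X$'s, so $N=0$ and no power of $h$ appears. I will show that the same formula holds for arbitrary sequences $s_1,\dots,s_M$ and $t_1,\dots,t_M$.

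There are two cases.
\begin{itemize}
    \item If two of the $s_i$ (or two of the $t_j$) coincide, the monomial is zero in the exterior algebra. Correspondingly, the determinant has two equal rows (or columns), so both sides vanish.
    \item If the indices are distinct, reorder the $c$'s among themselves by a permutation $\rho$ and the $\bar c$'s by a permutation $\tau$. The monomial changes by $\operatorname{sign}(\rho)\operatorname{sign}(\tau)$, because the $c$-block and the $\bar c$-block are each permuted internally. The determinant changes by exactly the same factor, since we are permuting its rows and columns.
\end{itemize}
By $A$-linearity of $\langle\cdot\rangle_{B,P}$, the determinant formula therefore holds for all sequences. This bookkeeping step is the only real obstacle: one has to check that the exterior-algebra sign of the reordering matches the determinant sign, and that the degenerate (repeated-index) case is consistent.

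Next, apply Laplace expansion of $\det[Q^{s_i}_{t_j}]$ along its first column, the one indexed by $t_1$:
\begin{gather*}
    \det\bigl[Q^{s_i}_{t_j}\bigr]=\sum_{l=1}^M(-1)^{l+1}Q^{s_l}_{t_1}\det\bigl[\text{minor}_{l,1}\bigr].
\end{gather*}
The minor $\text{minor}_{l,1}$ is exactly the $(M-1)\times(M-1)$ matrix attached to $c^{s_1}\cdots\widehat{c^{s_l}}\cdots c^{s_M}\bar c_{t_2}\cdots\bar c_{t_M}$. By the first step, applied with $M-1$ in place of $M$,
\begin{gather*}
    \det\bigl[\text{minor}_{l,1}\bigr]=(-1)^{(M-1)(M-2)/2}\langle c^{s_1}\cdots\widehat{c^{s_l}}\cdots c^{s_M}\bar c_{t_2}\cdots\bar c_{t_M}\rangle_{B,P}.
\end{gather*}

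It remains to collect signs. We have
\begin{gather*}
    (-1)^{M(M-1)/2}(-1)^{(M-1)(M-2)/2}=(-1)^{(M-1)(M-1)}=(-1)^{M-1},
\end{gather*}
so the total sign is $(-1)^{l+1}(-1)^{M-1}=(-1)^{l+M}$. This gives the first equality of the proposition.

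The second equality follows in the same way by expanding along the first row, the one indexed by $s_1$. The cofactor sign is again $(-1)^{1+l}$ and the normalizing factors contribute the same $(-1)^{M-1}$. For $M=1$ both identities reduce to $\langle c^{s}\bar c_t\rangle_{B,P}=Q^{s}_{t}$, consistent with the empty expectation being $1$.
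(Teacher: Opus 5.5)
Your proof is correct and is the same argument as the paper's, which says only that the identities follow from expanding the determinant along its first row (respectively, column). You also make explicit two points the paper leaves implicit: that $(-1)^{M(M-1)/2}\det[(P^{-1})^{s_i}_{t_j}]$ gives the expectation even for unsorted or repeated indices, and that the normalizing signs combine as $(-1)^{M(M-1)/2}(-1)^{(M-1)(M-2)/2}=(-1)^{M-1}$ to yield the factor $(-1)^{l+M}$.
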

\begin{proof}
    This follows by expanding the determinant along its first row (respectively, column).
\end{proof}

\begin{proposition}
    Suppose $g\in h^k A[h^{-1},X^1,\dots,X^{n+m}]$,
    $u \in \wedge^* A\{c^1,\dots,c^m,\bar{c}_1,\dots,\bar{c}_m\}$,
    Then
    \begin{gather*}
        \langle g u \rangle_{B,P} = \langle g \rangle_{B,P} \langle u \rangle_{B,P}
    \end{gather*}

\end{proposition}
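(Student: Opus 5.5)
The plan is to reduce to monomials by $A$-linearity and then compare the defining Wick formula on both sides. Both sides are $A$-bilinear in $(g,u)$: the left side because $\langle\cdot\rangle_{B,P}$ is an $A$-module homomorphism and multiplication in $R$ is bilinear, and the right side because it is a product of two $A$-linear maps. It therefore suffices to treat $g = h^{\alpha}X^{s_1}\cdots X^{s_N}$ and $u = c^{t_1}\cdots c^{t_M}\bar c_{u_1}\cdots\bar c_{u_{M'}}$, with the ghost indices strictly increasing. This is the normal ordering used in the definition, since any ghost monomial is, up to a sign, of this form and the sign appears identically on both sides. We regard $gu$ as an element of $G^k$ for $k$ large enough, so that $\langle\cdot\rangle_{B,P}$ is simply the formula of the Wick theorem definition and the truncation to $G^k$ plays no role.

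Next we dispose of the vanishing cases. If $N$ is odd or $M\neq M'$, then $gu$ falls under the vanishing clause of the definition, so $\langle gu\rangle_{B,P}=0$. In the same case one of the factors $\langle g\rangle_{B,P}$ or $\langle u\rangle_{B,P}$ also falls under the vanishing clause: $g$ has odd $X$-degree, or $u$ has unequal numbers of $c$ and $\bar c$. Hence both sides are zero.

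In the remaining case we write $gu = h^\alpha X^{s_1}\cdots X^{s_N}c^{t_1}\cdots c^{t_M}\bar c_{u_1}\cdots\bar c_{u_M}$, which is already in the normal form required by the definition. The defining formula then gives a product of three factors: $h^\alpha(ih)^{N/2}$, the sum over pairings of products of entries of $B^{-1}$, and the signed determinant-type sum $(-1)^{M(M-1)/2}\sum_\sigma \operatorname{sign}(\sigma)\prod (P^{-1})^{t_i}_{u_{\sigma(i)}}$. Evaluating the definition on $g$ alone (where $M=0$, so the ghost factor is the empty product $1$) gives exactly the first two factors. Evaluating it on $u$ alone (where $\alpha=N=0$, so the bosonic factor is $1$) gives exactly the third. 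The product of the two single evaluations therefore reproduces $\langle gu\rangle_{B,P}$.

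The only point needing care is the ordering and sign convention. One must check that placing $g$ before $u$ introduces no sign, which holds because $g$ is even. One must also check that the normal form of $gu$ is literally the concatenation of the normal forms of $g$ and $u$, so the factor $(-1)^{M(M-1)/2}$ is computed the same way on both sides. Beyond this the argument is a direct reading of the definition.
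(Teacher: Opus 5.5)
Your proposal is correct and takes the same approach as the paper: reduce to monomials by linearity, then read the factorization directly off the Wick formula. You also spell out details the paper's one-line proof leaves implicit, namely the vanishing cases, the empty-product conventions, and the check that the normal form of $gu$ is the concatenation of those of $g$ and $u$.
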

\begin{proof}
    It suffices to check the identity on monomials, where it follows directly from the definition.
\end{proof}

\begin{remark}
    The following proposition is the formal analog of
    \begin{gather*}
        \int \partial_j( e^{\frac{i}{h}\frac{1}{2}B_{ij} X^i X^j + P_a^b c^a\bar{c}_b} \theta
        )dX^1...dX^{n+m} dc^1...dc^m d\bar{c}_1...d\bar{c}_m = 0
    \end{gather*}
\end{remark}

\begin{proposition}[]\label{identityofpartialk}
    For any $\theta \in R$,
    \begin{gather*}
        \langle-\frac{1}{ih}B_{ij}X^i\theta + \partial_j\theta\rangle_{B,P} = 0
    \end{gather*}
\end{proposition}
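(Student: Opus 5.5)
By $A$-linearity and continuity of $\langle\cdot\rangle_{B,P}$ with respect to the inverse-limit topology, it suffices to prove the identity for a single monomial
\[
\theta = h^{\alpha}X^{s_1}\cdots X^{s_N}\,u,
\]
where $u$ is a ghost monomial. A given coefficient of $h$ only sees finitely many degrees, so we may reduce to $\theta\in G^k$ for some $k$. Both $X^i\theta$ and $\partial_j\theta$ have the same ghost factor $u$. The preceding factorization proposition, $\langle gu\rangle=\langle g\rangle\langle u\rangle$, then lets us pull out $\langle u\rangle_{B,P}$, and the ghosts play no further role. The powers of $h$ are also just a common scalar. We are reduced to proving
\[
\frac{1}{ih}B_{ij}\langle X^iX^{s_1}\cdots X^{s_N}\rangle_{B,P} = \langle \partial_j(X^{s_1}\cdots X^{s_N})\rangle_{B,P}.
\]

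\textbf{Parity.} If $N$ is even, $X^iX^{s_1}\cdots X^{s_N}$ has an odd number of $X$'s. The derivative $\partial_j(X^{s_1}\cdots X^{s_N})$ is a sum of monomials of odd degree $N-1$. Both sides therefore vanish by definition.

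\textbf{Odd $N$.} Apply Proposition~\ref{propofexpectation1} to the monomial $X^iX^{s_1}\cdots X^{s_N}$, with the new factor $X^i$ in the first slot:
\[
\langle X^iX^{s_1}\cdots X^{s_N}\rangle_{B,P}
= ih\sum_{l=1}^N (B^{-1})^{i s_l}\,\langle X^{s_1}\cdots\widehat{X^{s_l}}\cdots X^{s_N}\rangle_{B,P}.
\]
Multiply by $\frac{1}{ih}B_{ji}$ and sum over $i$. Using $B_{ji}(B^{-1})^{is_l}=\delta_j^{s_l}$ and the symmetry of $B$ (so $B_{ij}=B_{ji}$), the left side becomes
\[
\sum_{l}\delta_j^{s_l}\,\langle X^{s_1}\cdots\widehat{X^{s_l}}\cdots X^{s_N}\rangle_{B,P}.
\]
By the Leibniz rule, $\partial_j(X^{s_1}\cdots X^{s_N})=\sum_l \delta_j^{s_l}X^{s_1}\cdots\widehat{X^{s_l}}\cdots X^{s_N}$, so this is exactly $\langle\partial_j(X^{s_1}\cdots X^{s_N})\rangle_{B,P}$. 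The $X$ variables commute, so no reordering signs arise.

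\textbf{Main obstacle.} The only delicate point is the reduction to monomials. One must check three things:
\begin{itemize}
\item multiplication by $X^i$ and $\partial_j$ are compatible with the projective system, namely $G^k\to G^{k+1}$ and $G^k\to G^{k-1}$ up to degree shifts;
\item the identity holds at each finite level $k$;
\item the induced map to $A[[h]][h^{-1}]$ is determined coefficientwise by the finite levels.
\end{itemize}
Since each coefficient of $h^r$ in $\langle\cdot\rangle_{B,P}$ depends only on the image in some $G^k$, the monomial identity extends to all $\theta\in R$.
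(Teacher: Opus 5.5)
Your proof is correct and follows the paper's argument. You reduce to a monomial $h^\alpha X^{s_1}\cdots X^{s_N}u$, factor off $\langle u\rangle_{B,P}$, expand $\langle X^iX^{s_1}\cdots X^{s_N}\rangle_{B,P}$ by Proposition~\ref{propofexpectation1}, and contract with $B_{ij}$ to match the Leibniz expansion of $\partial_j$; you also spell out the even-$N$ parity case and the extension from monomials to all of $R$, which the paper leaves implicit.
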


\begin{proof}
    We can reduce to the case that $\theta$ is monomial,
    then we can write
    \begin{gather*}
        \theta = h^\alpha X^{s_1}\dots X^{s_N}u
    \end{gather*}
    Here $u \in \wedge^* A\{c^1,\dots,c^m,\bar{c}_1,\dots,\bar{c}_m\}$,

    Then

    \begin{align*}
         & \left\langle -\frac{1}{ih} B_{ij} X^j \theta + \partial_i^{(k)} \theta \right\rangle_{k-1, B,P} \\  & = \left\langle -\frac{1}{ih}
           B_{ij} X^j X^{s_1} \cdots X^{s_N} + \partial_i \left( X^{s_1} \cdots X^{s_N} \right) \right\rangle_{B,P} \, h^\alpha
        \left\langle u \right\rangle_{B,P}                                                                 \\  & = \left( -\frac{1}{ih} B_{ij} \cdot ih
           \sum_{l=1}^N (B^{-1})^{j s_l} \left\langle X^{s_1} \cdots \widehat{X^{s_l}}
        \cdots X^{s_N} \right\rangle_{B,P} \right.                                                         \\  & \qquad \left. + \sum_{m=1}^N
           \delta_i^{s_m} \left\langle X^{s_1} \cdots \widehat{X^{s_m}} \cdots X^{s_N}
        \right\rangle_{B,P} \right) h^\alpha \left\langle u \right\rangle_{B,P}                            \\  & =
           0
    \end{align*}

    The second equal sign uses \ref{propofexpectation1}.

\end{proof}

\begin{remark}
    The following proposition is the formal analog of
    \begin{gather*}
        \int D_a^j \partial_j( e^{\frac{i}{h}\frac{1}{2}B_{ij} X^i X^j + P_a^b c^a\bar{c}_b} \theta
        )\rho dX^1...dX^{n+m} dc^1...dc^m d\bar{c}_1...d\bar{c}_m = 0
    \end{gather*}
\end{remark}

\begin{proposition}
    If $\theta\in R$ and $\rho$ is an invariant density, then
    \begin{gather*}
        \langle  D_a^j\partial_j\theta \rho\rangle_{B,P} = \langle\frac{1}{ih}B_{ij}X^iD_a^j\theta
        \rho\rangle_{B,P}
    \end{gather*}
\end{proposition}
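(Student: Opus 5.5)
The approach is to apply Proposition~\ref{identityofpartialk} to the element $D_a^j\theta\rho$, one index $j$ at a time, sum over $j$, and then use the invariance of $\rho$ to remove the terms where the derivative hits $D_a^j$ or $\rho$. Concretely, for each fixed $j$, Proposition~\ref{identityofpartialk} with $\theta$ replaced by $D_a^j\theta\rho$ gives
\begin{gather*}
    \left\langle -\frac{1}{ih}B_{ij}X^iD_a^j\theta\rho + \partial_j\left(D_a^j\theta\rho\right)\right\rangle_{B,P} = 0 .
\end{gather*}
This is legitimate because $D_a^j,\rho\in A[[X]]\subseteq F_0\subseteq R$ and $R$ is a ring. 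Summing over $j$ and expanding with the Leibniz rule on $R$ gives
\begin{gather*}
    \partial_j(D_a^j\theta\rho) = D_a^j(\partial_j\theta)\rho + \theta\left(\partial_jD_a^j\,\rho + D_a^j\partial_j\rho\right).
\end{gather*}
The parenthesized factor vanishes by the definition of an invariant density. Using the $A$-linearity of $\langle\cdot\rangle_{B,P}$, we are left with $\langle D_a^j\partial_j\theta\,\rho\rangle_{B,P}=\langle \frac{1}{ih}B_{ij}X^iD_a^j\theta\rho\rangle_{B,P}$, which is the claim. Note that $\theta$ may contain ghost variables, but $\partial_j$ only acts on the $X$'s. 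Since the ghosts are even-degree-free and $D_a^j,\rho$ are purely bosonic, no sign issues arise in the Leibniz rule.

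The main point needing care is justifying the interchange of the infinite sums with the expectation. The expectation $\langle\cdot\rangle_{B,P}$ is defined through the inverse limit, so its linearity is only finite-level linearity assembled compatibly. I would check that multiplication by elements of $A[[X]]$, and by $X^i$, preserves the filtration: such multiplication does not decrease degree, so it maps $F_l$ into $F_l$. Likewise $\partial_j$ lowers degree by at most one, so it maps $F_l$ into $F_{l-1}$, and the identity in Proposition~\ref{identityofpartialk} is stated for arbitrary $\theta\in R$. Consequently each of the three expressions is a well-defined element of $R$, and the identity holds exactly in $R$ before the expectation is taken. No truncation argument is needed beyond what Proposition~\ref{identityofpartialk} already provides.

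A minor index point also needs attention. The statement pairs $B_{ij}X^i$ with $D_a^j$, while Proposition~\ref{identityofpartialk} is written with $B_{ij}X^i$ and free index $j$. Since $B$ is symmetric, these agree, and contracting the free index $j$ with $D_a^j$ and summing gives exactly the stated right-hand side.
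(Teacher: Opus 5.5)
Your proof is correct and is essentially the paper's argument. The paper first uses invariance of $\rho$ to rewrite $\partial_j\theta\, D_a^j\rho$ as $\partial_j(\theta D_a^j\rho)$ and then applies Proposition~\ref{identityofpartialk}, which is your computation with the two steps in the opposite order (and the symmetry of $B$ you invoke at the end is not actually needed, since the free index $j$ of Proposition~\ref{identityofpartialk} is already the one contracted with $D_a^j$).
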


\begin{proof}
    \begin{gather*}
        \langle\partial_j\theta D_a^j \rho \rangle_{B,P} = \langle \partial_j(\theta D_a^j \rho) \rangle_{B,P} =\langle \frac{1}{ih }B_{ij}X^i\theta
        D_a^j\rho\rangle_{B,P}
    \end{gather*}

    The first equal sign is by definition of invariant density, the second equal
    sign is by the last proposition.

\end{proof}

Recall that the BRST differential~\cite{BECCHI1976287,tyutin2008gaugeinvariancefieldtheory}
is given by the following expression (we use the convention of~\cite{Reshetikhin_2010}):
\begin{gather*}
    c^a D_a^k \frac{\partial}{\partial X^k} -\frac{1}{2}f^a_{bc}c^b c^c \frac{\overleftarrow{\partial} }{\partial c^a} + \lambda_a \frac{\overleftarrow{\partial}}{\partial \bar{c}_a}
\end{gather*}

Motivated by the Chevalley--Eilenberg differential
\begin{gather*}
    c^a D_a^k \frac{\partial}{\partial X^k} -\frac{1}{2}f^a_{bc}c^b c^c \frac{\overleftarrow{\partial} }{\partial c^a}
\end{gather*}
we define the following operator $Q_C$:
\begin{definition}
    The odd derivation $Q_C$ on $R$ is defined on generators by

    \begin{gather*}
        Q_C(X^k) = c^a D_a^k \quad Q_C(c^a) = -\frac{1}{2}f^a_{bc}c^b c^c \quad Q_C(\bar{c}_a) = 0
    \end{gather*}
    and, for homogeneous monomials $\alpha,\beta$, by the graded Leibniz rule
    \begin{gather*}
        Q_C(\alpha \beta) = Q_C(\alpha) \beta +(-1)^{|\alpha|} \alpha Q_C(\beta)
    \end{gather*}
    Here the $\mathbb{Z}_2$ grading $|\cdot|$ is induced from the standard one on
    $\wedge^* A\{c^1,\dots,c^m,\bar{c}_1,\dots,\bar{c}_m\}$.
\end{definition}

\begin{remark}
    One can check that $Q_C^2 = 0$, but we will not use this fact.
\end{remark}

\begin{remark}
    The following proposition is the formal analog of
    \begin{gather*}
        \int e^{\frac{i}{h}\frac{1}{2}B_{ij} X^i X^j }Q_C( e^{ P_a^b c^a\bar{c}_b} u )dX^1...dX^{n+m} dc^1...dc^m d\bar{c}_1...d\bar{c}_m = 0
    \end{gather*}
\end{remark}

\begin{proposition}
    If $u\in \wedge^* A\{c^1,\dots,c^m,\bar{c}_1,\dots,\bar{c}_m\}$, then
    \begin{gather*}
        \langle Q_C(u)\rangle_{B,P} =\langle \frac{1}{2}f^a_{cd}c^c c^dP_a^b \bar{c}_b u \rangle_{B,P}
    \end{gather*}
\end{proposition}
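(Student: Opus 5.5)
The plan is to imitate Berezin integration by parts: the expectation of a ``total $Q_C$-derivative'' vanishes, and unimodularity supplies the vanishing divergence. Since $u$ contains no $X$-variables, and both sides are $A$-linear in $u$, I will reduce to a ghost monomial
\begin{gather*}
u = c^{s_1}\cdots c^{s_K}\,\bar c_{t_1}\cdots \bar c_{t_L}.
\end{gather*}
On such $u$, $Q_C$ acts only through $Q_C(c^a)=-\frac12 f^a_{bc}c^bc^c$. Both $Q_C(u)$ and $\frac12 f^a_{cd}c^cc^dP_a^b\bar c_b u$ have one more $c$ than $u$ and, respectively, the same number of $\bar c$ as $u$ and one more. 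By the definition of $\langle\cdot\rangle_{B,P}$ the left side vanishes unless $K+1=L$, while the right side vanishes unless $K+2=L+1$; both conditions are $L=K+1$. So I only treat $L=K+1$, and otherwise both sides are $0$.

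Next I expand the right-hand side using Proposition~\ref{propofexpectation2}. First I move $\bar c_b$ to the front of the $\bar c$-block, recording a sign, and then expand along that column: $\bar c_b$ is contracted with each $c$ in turn, producing a factor $(P^{-1})^{x}_b$ with a sign $(-1)^{l+M}$. Combining with $P_a^b$ gives $P_a^b(P^{-1})^x_b=\delta^x_a$. The contractions then fall into two groups.
\begin{itemize}
\item \emph{Contraction with $c^c$ or $c^d$ from the inserted pair.} This produces $f^a_{ad}c^d$ (or $f^a_{ca}c^c$). Unimodularity gives $f^a_{ab}=0$, which is exactly the trace condition (equivalently, $\operatorname{tr}\operatorname{ad}_{T_b}=0$), so these terms vanish.
\item \emph{Contraction with some $c^{s_l}$ from $u$.} The factor $\delta^{s_l}_a$ leaves the ghost monomial obtained from $u$ by deleting $c^{s_l}$ and inserting $\frac12 f^{s_l}_{cd}c^cc^d$ at the front.
\end{itemize}

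Then I expand the left-hand side directly by the graded Leibniz rule:
\begin{gather*}
Q_C(u)=\sum_l (-1)^{l-1}c^{s_1}\cdots\big(-\tfrac12 f^{s_l}_{cd}c^cc^d\big)\cdots c^{s_K}\,\bar c_{t_1}\cdots\bar c_{t_L},
\end{gather*}
where $Q_C(\bar c)=0$ kills the remaining terms. Moving the even element $c^cc^d$ to the front costs no sign. Each term is therefore, up to an explicit sign, the same monomial as the $l$-th surviving term on the right-hand side. It remains to check that these signs agree term by term, including the overall minus sign from $Q_C(c^a)$. The proposition then follows by linearity, and by passing to the inverse limit, since everything is compatible with the projections $G^{k+1}\to G^k$.

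The main obstacle I expect is precisely this sign bookkeeping. The sources are the factor $(-1)^{M(M-1)/2}$ in the definition, the $(-1)^{l+M}$ in Proposition~\ref{propofexpectation2}, the reordering of $\bar c_b$ past the $\bar c_{t}$'s and of $c^{s_l}$ past the other $c$'s, and the minus sign in $Q_C(c^a)$. To keep this controlled, I will first verify the case $K=0$, $L=1$: there the left side is $0$ and the right side is $\frac12 f^a_{cd}P_a^b\langle c^cc^d\bar c_b\bar c_{t}\rangle$, which reduces to $f^a_{ad}$ terms and hence vanishes. That case pins down the conventions, and I will then do the general case by fixing the order ``inserted pair, then $u$'' and applying the first-row expansion of the determinant once.
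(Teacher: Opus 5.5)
Your proposal is correct and follows essentially the same route as the paper: reduce to ghost monomials with exactly one more $\bar c$ than $c$, move $\bar c_b$ to the front of the antighost block (sign $(-1)^K$), expand along that column via Proposition~\ref{propofexpectation2}, kill the contractions with the inserted pair $c^cc^d$ by unimodularity, and match what remains against the Leibniz expansion of $Q_C(u)$. The sign bookkeeping you flag does close: the contraction with $c^{s_l}$ carries net sign $(-1)^{l+K+2}(-1)^K=(-1)^l$, which is exactly the sign of the $l$-th term in $Q_C(u)=\sum_l(-1)^l\tfrac12 f^{s_l}_{cd}c^cc^d\,c^{s_1}\cdots\widehat{c^{s_l}}\cdots c^{s_K}\,\bar c_{t_1}\cdots\bar c_{t_L}$.
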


\begin{proof}
    It suffices to take $u$ to be a monomial with one more $\bar c$ factor than
    $c$ factors; otherwise both sides vanish.

    So we write
    \begin{gather*}
        u = c^{s_1}\dots c^{s_N}\bar{c}_{t_0}\bar{c}_{t_1}\dots\bar{c}_{t_N}
    \end{gather*}
    Then
    \begin{gather*}
        \langle f^a_{cd}c^c c^dP_a^b \bar{c}_b \cdot u \rangle_{B,P}\\
        =(-1)^Nf^a_{cd}P_a^b\langle c^c c^d c^{s_1}\dots c^{s_N}\bar{c}_b\bar{c}_{t_0}\bar{c}_{t_1}\dots\bar{c}_{t_N} \rangle_{B,P}\\
        =-f^a_{cd}P_a^b(P^{-1})_b^c\langle c^dc^{s_1}\dots c^{s_N}\bar{c}_{t_0}\bar{c}_{t_1}\dots\bar{c}_{t_N}\rangle_{B,P}\\
        +f^a_{cd}P_a^b(P^{-1})_b^d\langle c^cc^{s_1}\dots c^{s_N}\bar{c}_{t_0}\bar{c}_{t_1}\dots\bar{c}_{t_N}\rangle_{B,P}\\
        +f^a_{cd}P_a^b\sum_{k=1}^{N}(-1)^k(P^{-1})_b^{s_k}
        \langle c^cc^dc^{s_1}\dots\widehat{c^{s_k}}\dots c^{s_N}\bar{c}_{t_0}\bar{c}_{t_1}\dots\bar{c}_{t_N}\rangle_{B,P}\\
        =-f_{ad}^a\langle c^d u \rangle_{B,P} + f_{ca}^a\langle c^c u \rangle_{B,P}
        + \sum_{k=1}^{N}(-1)^k f_{cd}^{s_k}
        \langle c^cc^dc^{s_1}\dots\widehat{c^{s_k}}\dots c^{s_N}\bar{c}_{t_0}\bar{c}_{t_1}\dots\bar{c}_{t_N} \rangle_{B,P}\\
    \end{gather*}

    Since $\mathfrak{g}$ is unimodular, we have
    \begin{gather*}
        f_{ad}^a = 0, f_{ca}^a = 0
    \end{gather*}

    By definition,
    \begin{gather*}
        Q_C(u) = \sum_{k=1}^{N}(-1)^k \frac{1}{2}f_{cd}^{s_k}
        c^cc^dc^{s_1}\dots\widehat{c^{s_k}}\dots c^{s_N}\bar{c}_{t_0}\bar{c}_{t_1}\dots\bar{c}_{t_N}
    \end{gather*}

    Combining these identities proves the proposition.
\end{proof}

\begin{remark}
    The following proposition is the formal analog of
    \begin{gather*}
        \int Q_C( e^{\frac{i}{h}\frac{1}{2}B_{ij} X^i X^j + P_a^b c^a\bar{c}_b} \theta )\rho dX^1...dX^{n+m} dc^1...dc^m d\bar{c}_1...d\bar{c}_m = 0
    \end{gather*}
\end{remark}

\begin{proposition}[]\label{identityofQ_C}
    For any $\theta \in R$, and any invariant density $\rho$, we have
    \begin{gather*}
        \left\langle
        \left(
        (
        -\frac{1}{ih} B_{ij} X^i c^a D_a^j
        - \frac{1}{2} f_{cd}^a c^c c^d P_a^b \bar{c}_b
        ) \theta
        + Q_C(\theta)
        \right) \rho \right\rangle_{B,P} = 0
    \end{gather*}
\end{proposition}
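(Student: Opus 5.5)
Since $Q_C$ is an odd derivation defined on generators, I will split $Q_C=Q_X+Q_c$, where $Q_X$ acts only on the even variables and $Q_c$ only on the ghosts:
\begin{gather*}
    Q_X(\theta)=c^aD_a^j\partial_j\theta ,\qquad Q_c(c^a)=-\tfrac12 f^a_{bc}c^bc^c,\quad Q_c(\bar c_b)=Q_c(X^k)=0 .
\end{gather*}
The ghosts are anticommuting and $c^a$ stands to the left of the derivative, so no sign appears in $Q_X$. The identity is linear in $\theta$ over $A$ and compatible with the projections defining $R$. I will therefore reduce to $\theta=g\,u$, with $g\in h^kA[h^{-1},X^1,\dots,X^{n+m}]$ a monomial and $u$ a ghost monomial. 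Note that $\rho$ is even and contains no ghosts, so it commutes with everything and can be absorbed into the $X$-part where needed.

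The statement then splits into two separate identities, which I will add together.

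\textbf{First identity:}
\begin{gather*}
    \langle Q_X(\theta)\rho\rangle_{B,P}=\langle \tfrac1{ih}B_{ij}X^ic^aD_a^j\theta\rho\rangle_{B,P}.
\end{gather*}
For fixed $a$, write $c^aD_a^j\partial_j(gu)=(D_a^j\partial_j g)\,c^a u$, using that $D_a^j$ has no ghosts. Then apply the preceding proposition on invariant densities to $\theta' = g$ and $\rho$, and tensor with the ghost factor $c^au$. This step needs a slightly strengthened version of that proposition, with an arbitrary ghost factor carried along. That version follows either by noting that its proof works verbatim for $\theta\in R$ containing ghosts (Proposition~\ref{identityofpartialk} is stated for all $\theta\in R$), or by the factorization $\langle gu\rangle=\langle g\rangle\langle u\rangle$. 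I plan to apply it directly with $\theta'=g\,c^a u$, where $c^a$ is placed appropriately. This is legitimate because $D_a^j\partial_j$ commutes with ghost multiplication.

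\textbf{Second identity:}
\begin{gather*}
    \langle Q_c(\theta)\rho\rangle_{B,P}=\langle \tfrac12 f^a_{cd}c^cc^dP_a^b\bar c_b\,\theta\rho\rangle_{B,P}.
\end{gather*}
Since $g\rho$ is even and ghost-free, $Q_c(gu\rho)=g\rho\,Q_c(u)$. By the factorization proposition, the left side equals $\langle g\rho\rangle\langle Q_C(u)\rangle$, because $Q_C$ and $Q_c$ agree on pure ghost elements. The right side equals $\langle g\rho\rangle\langle\tfrac12 f c c P\bar c\,u\rangle$. The previous proposition on $\langle Q_C(u)\rangle$ gives equality of the ghost factors. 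Here $\rho$ is a series rather than a polynomial, so the factorization must be extended to $R$ by linearity and continuity; this is routine because each degree truncation is finite.

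Subtracting the two identities from the expression in the statement gives zero. The main obstacle is sign bookkeeping. I need to check that moving $c^a$ and the even ghost bilinear past $\theta$ produces no stray signs, which holds because $c^cc^d\bar c_b$ is odd but sits on the left exactly as it does in $Q_c$'s action. I also need the ordering convention in the previous proposition, $\frac12 f c c P \bar c\, u$, to match $(\ldots)\theta$ in the statement, which it does. For odd $u$ there is the extra sign $(-1)^{|g|}=1$, since $g$ is even.
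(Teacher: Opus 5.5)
Your proposal is correct and follows essentially the same route as the paper. You reduce to a monomial $\theta = gu$ and split $Q_C(gu)$ into $c^aD_a^j\partial_j g\,u$ and $g\,Q_C(u)$. The first piece is handled by the invariant-density proposition applied to $c^a g u$; since that proposition is already stated for all $\theta\in R$, no strengthening is needed. The second piece is handled by $\langle g\rho\,Q_C(u)\rangle = \langle g\rho\rangle\langle Q_C(u)\rangle$ together with the proposition computing $\langle Q_C(u)\rangle$. You also note explicitly that this factorization must be extended by continuity to the power series $g\rho$; the paper uses that step silently.
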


\begin{proof}
    By linearity it suffices to take $\theta$ to be a monomial. Write
    $\theta=gu$, where
    $g\in h^k A[h^{-1},X^1,\dots,X^{n+m}]$,
    $u \in \wedge^* A\{c^1,\dots,c^m,\bar{c}_1,\dots,\bar{c}_m\}$,
    Then
    \begin{gather*}
        \langle Q_C(\theta) \rho \rangle_{B,P}
        =\langle (D_a^j\partial_j g c^a u + g \cdot Q_C(u))\rho\rangle_{B,P}\\ =\langle D_a^j\partial_j(c^a g u) \rho\rangle_{B,P} +\langle g \rho \rangle_{B,P}\langle
        Q_C(u)\rangle_{B,P}\\ =\langle\frac{1}{ih}B_{ij}X^iD_a^j c^a g u
        \rho\rangle_{B,P} + \langle g \rho \rangle_{B,P}\langle \frac{1}{2}f^a_{cd}c^c
        c^dP_a^b \bar{c}_b \cdot u \rangle_{B,P}\\ =\langle (\frac{1}{ih} B_{ij} X^i
        c^a D_a^j + \frac{1}{2} f_{cd}^a c^c c^d P_a^b \bar{c}_b ) \theta\rangle_{B,P}
    \end{gather*}
\end{proof}

Inspired by de Rham differential
\begin{gather*}
    \lambda_a \frac{\overleftarrow{\partial}}{\partial \bar{c}_a}
\end{gather*}
we define the following operator:

\begin{definition}
    Let $M_a$ be even, ghost-independent elements of
    $A[h^{-1}][[h,X^1,\dots,X^{n+m}]]$, regarded as a subring of $R$.
    The differential operator $Q_D$ on $R$ is given by
    \begin{gather*}
        Q_D(X^k) = 0,\quad Q_D(c^a) = 0, \quad Q_D(\bar{c}_a) = M_a
    \end{gather*}
    and for monomial $\alpha,\beta$
    \begin{gather*}
        Q_D(\alpha \beta ) = Q_D(\alpha)\beta + (-1)^{|\alpha|}\alpha Q_D(\beta)
    \end{gather*}
\end{definition}

\begin{remark}
    One can check that $Q_D^2 = 0$, but we will not use this fact.
\end{remark}

\begin{remark}
    The following proposition is the formal analog of
    \begin{gather*}
        \int Q_D( e^{\frac{i}{h}\frac{1}{2}B_{ij} X^i X^j + P_a^b c^a\bar{c}_b} \theta )dX^1...dX^{n+m} dc^1...dc^m d\bar{c}_1...d\bar{c}_m = 0
    \end{gather*}
\end{remark}

\begin{proposition}[]\label{identityofQ_D}
    For any $\theta \in R$, we have
    \begin{gather*}
        \langle - c^a M_b P_a^b \theta + Q_D(\theta) \rangle_{B,P} = 0
    \end{gather*}
\end{proposition}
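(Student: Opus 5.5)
By linearity and continuity (both sides are compatible with the projections $R\to G^k$), we reduce to $\theta$ a monomial. Write $\theta = g\,u$ with $g\in h^kA[h^{-1},X^1,\dots,X^{n+m}]$ and $u$ a ghost monomial. Since the $M_b$ are even and ghost-independent, $Q_D(\theta)= g\,Q_D(u)$, because $Q_D$ kills $X^k$ and $h$. Both $c^aM_bP_a^b\theta$ and $Q_D(\theta)$ have ghost part that is a single ghost monomial times bosonic factors. The plan is to expand each side, split off the bosonic part, and factor the expectation using the proposition that $\langle g u\rangle_{B,P}=\langle g\rangle_{B,P}\langle u\rangle_{B,P}$.

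By degree counting, both sides vanish unless $u$ has exactly one more $\bar c$ than $c$. So we write
\begin{gather*}
u = c^{s_1}\cdots c^{s_N}\bar c_{t_0}\bar c_{t_1}\cdots \bar c_{t_N}.
\end{gather*}
By the graded Leibniz rule,
\begin{gather*}
Q_D(u)=\sum_{k=0}^{N}(-1)^{N+k}M_{t_k}\, c^{s_1}\cdots c^{s_N}\bar c_{t_0}\cdots\widehat{\bar c_{t_k}}\cdots\bar c_{t_N},
\end{gather*}
where the sign comes from passing $Q_D$ over the $N$ ghosts $c^{s_i}$ and the $k$ preceding $\bar c$'s. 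Hence
\begin{gather*}
\langle Q_D(\theta)\rangle_{B,P}=\sum_k(-1)^{N+k}\langle gM_{t_k}\rangle_{B,P}\,\langle c^{s}\,\bar c_{\hat k}\rangle_{B,P}.
\end{gather*}
On the other side,
\begin{gather*}
\langle c^aM_bP_a^b\theta\rangle_{B,P}=P_a^b\langle gM_b\rangle_{B,P}\,\langle c^a u\rangle_{B,P}.
\end{gather*}
We then apply Proposition~\ref{propofexpectation2} (the column expansion, i.e.\ its second form) to $\langle c^a c^{s_1}\cdots c^{s_N}\bar c_{t_0}\cdots\bar c_{t_N}\rangle_{B,P}$, expanding along the first $c$, namely $c^a$. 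This yields
\begin{gather*}
\sum_{k}(\pm)(P^{-1})^a_{t_k}\langle c^{s_1}\cdots c^{s_N}\bar c_{t_0}\cdots\widehat{\bar c_{t_k}}\cdots\bar c_{t_N}\rangle_{B,P}.
\end{gather*}
Contracting with $P_a^b$ gives $\delta^b_{t_k}$, which turns $M_b$ into $M_{t_k}$. Each term then matches the corresponding term of $\langle Q_D(\theta)\rangle_{B,P}$.

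The main obstacle is the sign bookkeeping: verifying that the sign $(-1)^{l+M}$ from Proposition~\ref{propofexpectation2}, with $M=N+1$ and $l=k+1$, coincides with the Leibniz sign $(-1)^{N+k}$. Indeed $(-1)^{(k+1)+(N+1)}=(-1)^{N+k}$, so the terms agree as long as we keep the bosonic factors $M_b$ and $g$ outside the ghost expectation. This is legitimate because they are even, so moving them produces no signs. The case where $u$ has the wrong ghost numbers is immediate, since every term then has unequal numbers of $c$ and $\bar c$ and so has zero expectation by definition.
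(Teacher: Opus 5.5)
Your proposal is correct and follows essentially the same route as the paper. Both reduce to a monomial $\theta = gu$ with $u$ having one more $\bar c$ than $c$, factor off $\langle gM_b\rangle_{B,P}$, expand $\langle c^a u\rangle_{B,P}$ along $c^a$ via Proposition~\ref{propofexpectation2}, and use $P_a^b(P^{-1})^a_{t_k}=\delta^b_{t_k}$ to recognize $\langle g\,Q_D(u)\rangle_{B,P}$; your check that $(-1)^{(k+1)+(N+1)}=(-1)^{N+k}$ matches the Leibniz sign is the same sign the paper uses without comment.
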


\begin{proof}
    By linearity it suffices to take $\theta$ to be a monomial. Write
    $\theta=gu$, where
    $g\in h^k A[h^{-1},X^1,\dots,X^{n+m}]$,
    $u \in \wedge^* A\{c^1,\dots,c^m,\bar{c}_1,\dots,\bar{c}_m\}$,

    We can write $u$ as
    \begin{gather*}
        u = c^{s_1}\dots c^{s_N}\bar{c}_{t_0}\bar{c}_{t_1}\dots\bar{c}_{t_N}
    \end{gather*}

    It suffices to consider the case in which the number of $\bar c$ factors is
    one greater than the number of $c$ factors; otherwise both sides vanish.

    We have
    \begin{align*}
        \langle c^a M_b P_a^b \theta\rangle_{B,P}
         & = \langle g M_b P_a^b c^a  c^{s_1} \dots c^{s_N} \bar{c}_{t_0} \bar{c}_{t_1} \dots \bar{c}_{t_N} \rangle_{B,P}                               \\
         & = \langle g M_b\rangle_{B,P} P_a^b
        \langle  c^a  c^{s_1} \dots c^{s_N} \bar{c}_{t_0} \bar{c}_{t_1} \dots \bar{c}_{t_N} \rangle_{B,P}                                               \\
         & = \langle g M_b\rangle_{B,P} P_a^b
        \sum_{k=0}^{N} (-1)^{N+k} (P^{-1})^a_{t_k}
        \langle c^{s_1} \dots c^{s_N} \bar{c}_{t_0} \dots \hat{\bar{c}}_{t_k} \dots \bar{c}_{t_N}\rangle_{B,P}                                          \\
         & = \sum_{k=0}^{N} \langle (-1)^{N+k} g M_{t_k} c^{s_1} \dots c^{s_N} \bar{c}_{t_0} \dots \hat{\bar{c}}_{t_k} \dots \bar{c}_{t_N}\rangle_{B,P} \\
         & = \langle g\,Q_D(u)\rangle_{B,P}
    \end{align*}
\end{proof}

\begin{example}[]\label{exampleoffp2}
    Now suppose we have a smooth family of $f,\phi,L,g$ parameterized by $t$.
    Algebraically this can be done by setting $A = C^\infty (I,\mathbb{C})$.
    For $\theta \in R$, one may think
    \begin{gather*}
        \frac{\partial}{\partial t} \langle\theta \rangle_{B,P} = \langle \frac{\partial}{\partial t} \theta \rangle_{B,P}
    \end{gather*}
    This identity need not hold because $B$ and $P$ have entries in $A$; their
    derivatives also contribute.
\end{example}

\begin{remark}
    If we write the free exponent as
    \begin{gather*}
        \mathcal E_{\mathrm{free}} := \frac{i}{h}\frac{1}{2}B_{ij} X^i X^j + P_a^b c^a\bar{c}_b,
    \end{gather*}
    The following proposition is the formal analog of
    \begin{gather*}
        \frac{\partial }{\partial t }(\frac{\sqrt{\det B}}{\det P}\int e^{\frac{i}{h}S_{free}} \theta
        dX^1...dX^{n+m} dc^1...dc^m d\bar{c}_1...d\bar{c}_m )\\ =\frac{\sqrt{\det
                B}}{\det P} \int e^{\frac{i}{h}S_{free}}(\frac{\partial }{\partial t}\theta + \frac{\partial}{\partial t}(\frac{i}{h}S_{free} )\theta)dX^1...dX^{n+m} dc^1...dc^m
        d\bar{c}_1...d\bar{c}_m \\ + \frac{\partial}{\partial t} \frac{\sqrt{\det B}}{\det P}\int e^{\frac{i}{h}\frac{1}{2}B_{ij} X^i
        X^j + P_a^b c^a\bar{c}_b} \theta dX^1...dX^{n+m} dc^1...dc^m
        d\bar{c}_1...d\bar{c}_m
    \end{gather*}
\end{remark}

\begin{proposition}\label{deformation of expectation}
    Suppose $A = C^\infty(I,\mathbb{C})$, $I$ is parameterized by $t$,
    and $\frac{\partial}{\partial t}$ is defined as the same in \ref{exampleoffp2}, then for any $\theta \in
        R$,
    \begin{gather*}
        \frac{\partial}{\partial t}\langle \theta \rangle_{B,P} =\langle \frac{\partial}{\partial t} \theta - \frac{1}{2ih} \frac{\partial}{\partial t} B_{ij}X^i X^j \theta +c^a\bar{c}_b \frac{\partial}{\partial t} P_a^b \theta \rangle_{B,P}\\ +(\frac{1}{2}\frac{\partial}{\partial t} B_{ij} (B^{-1})^{ij} - (P^{-1})^a_b\frac{\partial}{\partial t} P_a^b)\langle \theta \rangle_{B,P}\\
    \end{gather*}

\end{proposition}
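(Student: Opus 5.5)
By $A$-linearity and the compatibility of $\langle\cdot\rangle_{B,P}$ with the projections $G^{k+1}\to G^k$, it suffices to check the identity on a single monomial $\theta = a(t)\,h^\alpha X^{s_1}\cdots X^{s_N} u$, where $a\in A$ and $u = c^{t_1}\cdots c^{t_M}\bar c_{u_1}\cdots\bar c_{u_M}$. The term $\langle \frac{\partial}{\partial t}\theta\rangle_{B,P}$ accounts exactly for $a'(t)$ in the Leibniz rule. So we may assume the coefficient is constant and must show that $\partial_t\langle\theta\rangle_{B,P}$ equals the remaining terms. The product formula $\langle gu\rangle_{B,P}=\langle g\rangle_{B,P}\langle u\rangle_{B,P}$ splits the problem into two parts:
\begin{gather*}
\partial_t\langle X^{S}\rangle_{B,P} = \left\langle -\tfrac{1}{2ih}\dot B_{ij}X^iX^jX^S\right\rangle_{B,P} + \tfrac12 \dot B_{ij}(B^{-1})^{ij}\langle X^S\rangle_{B,P},\\
\partial_t\langle u\rangle_{B,P} = \langle c^a\bar c_b\dot P_a^b\, u\rangle_{B,P} - (P^{-1})^a_b\dot P_a^b\langle u\rangle_{B,P}.
\end{gather*}
Here $\dot{}$ denotes $\partial_t$ and $X^S := X^{s_1}\cdots X^{s_N}$. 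The operator $c^a\bar c_b$ is even, so it commutes past $X$'s and ghosts without signs. The mixed terms then recombine by the product rule into the stated formula.

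\textbf{Bosonic part.} I would compute $\langle X^iX^jX^S\rangle_{B,P}$ by Wick's rule, splitting pairings of $\{i,j,s_1,\dots,s_N\}$ into two types.

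First, pairings where $i$ is paired with $j$. These give $ih(B^{-1})^{ij}\langle X^S\rangle_{B,P}$. Multiplied by $-\frac{1}{2ih}\dot B_{ij}$, they produce $-\frac12\dot B_{ij}(B^{-1})^{ij}\langle X^S\rangle$, which cancels the trace term.

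Second, pairings where $i$ is paired with $s_k$ and $j$ with $s_l$, $k\neq l$. These give $(ih)^2 (B^{-1})^{is_k}(B^{-1})^{js_l}$ times the Wick sum over the other indices; this sum can be organized via Proposition~\ref{propofexpectation1}. Using the symmetry of $\dot B$, each unordered pair $\{k,l\}$ appears twice. After multiplying by $-\frac{1}{2ih}\dot B_{ij}$, one gets $ih\sum_{k<l}\bigl(-B^{-1}\dot B B^{-1}\bigr)^{s_ks_l}$ times the Wick sum over pairings of the remaining indices. Since $\partial_t B^{-1} = -B^{-1}\dot B B^{-1}$, this is exactly the Leibniz expansion of $\partial_t$ applied to $(ih)^{N/2}\sum_\pi\prod (B^{-1})^{s_{\pi(2r-1)}s_{\pi(2r)}}$: differentiate one propagator at a time and regroup by which pair $\{s_k,s_l\}$ is differentiated. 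The odd case is trivial, since both sides vanish.

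\textbf{Fermionic part.} Here $\langle u\rangle_{B,P}$ is, up to the fixed sign $(-1)^{M(M-1)/2}$, the minor $\det\bigl((P^{-1})^{t_r}_{u_s}\bigr)$. The left side is therefore $\sum_{r,s}$ of (cofactor)$\cdot\partial_t(P^{-1})^{t_r}_{u_s}$, with $\partial_tP^{-1}=-P^{-1}\dot PP^{-1}$.

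For the right side, I move $c^a\bar c_b$ into normal position: $c^a$ to the front of the $c$'s, and $\bar c_b$ to the front of the $\bar c$'s. I then expand once along the $c^a$ row by Proposition~\ref{propofexpectation2}. The contraction of $c^a$ with $\bar c_b$ gives $(P^{-1})^a_b\dot P_a^b\langle u\rangle$, which cancels the trace term $-(P^{-1})^a_b\dot P^b_a\langle u\rangle$. The contractions of $c^a$ with some $\bar c_{u_s}$ leave $\bar c_b$ to be contracted with some $c^{t_r}$. This yields $(P^{-1})^a_{u_s}\dot P_a^b (P^{-1})^{t_r}_b$ times the complementary minor. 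Up to sign, that is $-\partial_t(P^{-1})^{t_r}_{u_s}$ times its cofactor.

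The main obstacle I expect is the sign bookkeeping in this fermionic step. One must verify that the signs from reordering $c^a\bar c_b$, from the two cofactor expansions, and from the normalization $(-1)^{M(M-1)/2}$ combine to exactly the cofactor signs $(-1)^{r+s}$, with the overall minus sign supplied by $\partial_tP^{-1}$. I would first check the cases $M=0$ and $M=1$ explicitly to fix conventions, and then argue in general with Jacobi's formula for the derivative of a determinant.
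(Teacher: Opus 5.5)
Your proposal is correct and is essentially the paper's argument. You reduce to monomials, split via $\langle gu\rangle_{B,P}=\langle g\rangle_{B,P}\langle u\rangle_{B,P}$, and match the Leibniz derivative of the propagators (using $\partial_t B^{-1}=-B^{-1}\dot B B^{-1}$ and $\partial_t P^{-1}=-P^{-1}\dot P P^{-1}$) against the Wick expansion of the inserted quadratic term, whose self-contraction cancels the trace term; the paper reads the same pairing/permutation identity from the left-hand side, while you read it from the right. The fermionic sign check you deferred does go through: $c^a\bar c_b u=(-1)^M c^a c^{t_1}\cdots c^{t_M}\bar c_b\bar c_{u_1}\cdots\bar c_{u_M}$ and $(-1)^{M}(-1)^{M(M+1)/2}=(-1)^{M(M-1)/2}$, so with $q_0=(P^{-1})^a_b$, $v_s=(P^{-1})^a_{u_s}$, $w_r=(P^{-1})^{t_r}_b$, $Q_{rs}=(P^{-1})^{t_r}_{u_s}$, the bordered-determinant identity $\det\begin{pmatrix}q_0 & v^T\\ w & Q\end{pmatrix}=q_0\det Q-v^T\mathrm{adj}(Q)\,w$ together with Jacobi's formula yields exactly the claimed fermionic identity.
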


\begin{remark}
    Equivalently, the right-hand side is the covariance formula
    \begin{gather*}
        \left\langle \frac{\partial\theta}{\partial t}
        + \frac{\partial\mathcal E_{\mathrm{free}}}{\partial t}\theta \right\rangle_{B,P}
        -\left\langle \frac{\partial\mathcal E_{\mathrm{free}}}{\partial t}\right\rangle_{B,P}
        \langle \theta \rangle_{B,P}.
    \end{gather*}
\end{remark}

\begin{proof}
    We first calculate some special but useful cases:

    First suppose that $M$ is even. Then
    \begin{gather*}
        \frac{\partial}{\partial t} \langle X^{s_1}\dots X^{s_M}\rangle_{B,P} =\frac{\partial}{\partial t} ( (ih)^{M/2}\sum_{\pi} (B^{-1})^{s_{\pi(1)}s_{\pi(2)}}\dots
        (B^{-1})^{s_{\pi(M-1)}s_{\pi(M)}} )\\ =(ih)^{M/2}\sum_{\pi} \sum_{k = 1}^{M/2}
        (B^{-1})^{s_{\pi(1)}s_{\pi(2)}}\dots \frac{\partial}{\partial t} (B^{-1})^{s_{\pi(2k-1)}s_{\pi(2k)}}\dots
        (B^{-1})^{s_{\pi(M-1)}s_{\pi(M)}}\\ =(ih)^{M/2}\sum_{\pi} \sum_{k = 1}^{M/2}
        (B^{-1})^{s_{\pi(1)}s_{\pi(2)}}\dots (-1) (B^{-1})^{s_{\pi(2k-1)} u}\frac{\partial}{\partial t} B_{uv} (B^{-1})^{v s_{\pi(2k)}}\dots
        (B^{-1})^{s_{\pi(M-1)}s_{\pi(M)}}\\ =(ih)^{M/2}(-\frac{\partial}{\partial t} B_{uv}) \sum_{\pi} \sum_{k = 1}^{M/2}
        (B^{-1})^{s_{\pi(1)}s_{\pi(2)}}\dots (B^{-1})^{s_{\pi(2k-1)} u} (B^{-1})^{v
                s_{\pi(2k)}}\dots (B^{-1})^{s_{\pi(M-1)}s_{\pi(M)}}\\ =(ih)^{M/2}(-\frac{\partial}{\partial t} B_{uv}) \frac{1}{2}\sum_{\pi} \sum_{k = 1}^{M/2} (
        (B^{-1})^{s_{\pi(1)}s_{\pi(2)}}\dots(B^{-1})^{s_{\pi(2k-1)} u} (B^{-1})^{v
                s_{\pi(2k)}}\dots (B^{-1})^{s_{\pi(M-1)}s_{\pi(M)}}\\
        +(B^{-1})^{s_{\pi(1)}s_{\pi(2)}}\dots(B^{-1})^{s_{\pi(2k-1)} v} (B^{-1})^{u
                s_{\pi(2k)}}\dots (B^{-1})^{s_{\pi(M-1)}s_{\pi(M)}} )
    \end{gather*}

    Now let's see what the summation contains. Suppose $s_{M+1} = u,s_{M+2} = v$,
    then the summation is over all pairing of $\{1,2,\dots,M,M+1,M+2\}$, except for
    the pairing in which $M+1$ and $M+2$ are paired together. Consequently,

    \begin{gather*}
        \langle X^u X^v X^{s_1}\dots X^{s_M}\rangle_{B,P} - \langle X^v X^u\rangle_{B,P} \langle X^{s_1}\dots X^{s_M}\rangle_{B,P}\\
        =(ih)^{M/2 +1} \sum_{\pi} \sum_{k = 1}^{M/2}
        ((B^{-1})^{s_{\pi(1)}s_{\pi(2)}}\dots (B^{-1})^{s_{\pi(2k-1)} u} (B^{-1})^{v s_{\pi(2k)}}\dots
        (B^{-1})^{s_{\pi(M-1)}s_{\pi(M)}}\\
        +(B^{-1})^{s_{\pi(1)}s_{\pi(2)}}\dots (B^{-1})^{s_{\pi(2k-1)} v} (B^{-1})^{u s_{\pi(2k)}}\dots
        (B^{-1})^{s_{\pi(M-1)}s_{\pi(M)}}
        )
    \end{gather*}

    Hence, if $g=X^{s_1}\dots X^{s_M}$ and $M$ is even, then
    \begin{gather*}
        \frac{\partial}{\partial t} \langle g\rangle_{B,P} =\langle - \frac{1}{ih} \frac{1}{2}\frac{\partial}{\partial t} B_{uv}X^u X^v g\rangle_{B,P} +\langle \frac{1}{ih} \frac{1}{2} \frac{\partial}{\partial t} B_{uv}X^u X^v\rangle_{B,P} \langle g\rangle_{B,P}
    \end{gather*}

    The identity also holds for odd $M$, because both sides then vanish.

    We also consider
    \begin{gather*}
        \frac{\partial}{\partial t} \langle c^{t_1}c^{t_2}\dots
        c^{t_N}\bar{c}_{u_1}\bar{c}_{u_2}\dots\bar{c}_{u_N}\rangle_{B,P} =\frac{\partial}{\partial t}( (-1)^{N(N-1)/2} \sum_{\sigma} \text{sign}(\sigma)
        (P^{-1})^{t_1}_{u_{\sigma(1)}} (P^{-1})^{t_2}_{u_{\sigma(2)}} \dots
        (P^{-1})^{t_N}_{u_{\sigma(N)}} )\\ =(-1)^{N(N-1)/2}\sum_{\sigma}\sum_{k =
            1}^{N}\text{sign}(\sigma) (P^{-1})^{t_1}_{u_{\sigma(1)}} \dots \frac{\partial}{\partial t} (P^{-1})^{t_k}_{u_{\sigma(k)}} \dots (P^{-1})^{t_N}_{u_{\sigma(N)}}\\
        =(-1)^{N(N-1)/2}\sum_{\sigma}\sum_{k = 1}^{N}\text{sign}(\sigma)
        (P^{-1})^{t_1}_{u_{\sigma(1)}} \dots (-1)(P^{-1})^{t_k}_{a} \frac{\partial}{\partial t} P^a_b (P^{-1})^{b}_{u_{\sigma(k)}} \dots
        (P^{-1})^{t_N}_{u_{\sigma(N)}}\\ =(-1)^{N(N-1)/2}(-\frac{\partial}{\partial t} P^a_b)\sum_{\sigma}\sum_{k = 1}^{N}\text{sign}(\sigma)
        (P^{-1})^{t_1}_{u_{\sigma(1)}} \dots (P^{-1})^{t_k}_{a}
        (P^{-1})^{b}_{u_{\sigma(k)}} \dots (P^{-1})^{t_N}_{u_{\sigma(N)}}
    \end{gather*}

    Now let's see what the summation contains. Suppose $t_{N+1} = a,u_{N+2} = b$,
    then the summation is over all permutation of $\{1,2,\dots,N,N+1\}$, except for
    the permutations for which $\sigma(N+1)=N+1$. It follows that

    \begin{gather*}
        \frac{\partial}{\partial t}\langle c^{t_1}c^{t_2}\dots
        c^{t_N}\bar{c}_{u_1}\bar{c}_{u_2}\dots\bar{c}_{u_N}\rangle_{B,P}\\ =\langle
        \frac{\partial}{\partial t} P_a^b c^a \bar{c}_b c^{t_1}c^{t_2}\dots
        c^{t_N}\bar{c}_{u_1}\bar{c}_{u_2}\dots\bar{c}_{u_N} \rangle_{B,P} - \langle
        \frac{\partial}{\partial t} P_a^b c^a \bar{c}_b \rangle_{B,P} \langle c^{t_1}c^{t_2}\dots
        c^{t_N}\bar{c}_{u_1}\bar{c}_{u_2}\dots\bar{c}_{u_N} \rangle_{B,P}
    \end{gather*}

    Thus, for $u = c^{t_1}c^{t_2}\dots
        c^{t_N}\bar{c}_{u_1}\bar{c}_{u_2}\dots\bar{c}_{u_{N'}}$,\ if $N = N'$, we have
    \begin{gather*}
        \frac{\partial}{\partial t}\langle u\rangle_{B,P} =\langle \frac{\partial}{\partial t} P_a^b c^a \bar{c}_b u \rangle_{B,P} - \langle \frac{\partial}{\partial t} P_a^b c^a \bar{c}_b \rangle_{B,P} \langle u \rangle_{B,P}
    \end{gather*}

    If $N \ne N'$, the equation is also true, since both sides are $0$.

    Now suppose $\theta \in R$ is monomial, then we write
    \begin{gather*}
        \theta = h^\alpha \lambda g u
    \end{gather*}
    Here $\lambda \in A, u = c^{t_1}c^{t_2}\dots c^{t_N}\bar{c}_{u_1}\bar{c}_{u_2}\dots\bar{c}_{u_{N'}}$,
    and $g = X^{s_1}\dots X^{s_M}$.

    So
    \begin{gather*}
        \frac{\partial}{\partial t} \langle \theta \rangle_{B,P} = \frac{\partial}{\partial t} (h^\alpha \lambda \langle g \rangle_{B,P} \langle u \rangle_{B,P} ) \\
        = h^\alpha \frac{\partial}{\partial t} \lambda \langle g \rangle_{B,P} \langle u \rangle_{B,P} + h^\alpha
        \lambda \frac{\partial}{\partial t} \langle g \rangle_{B,P} \langle u \rangle_{B,P} + h^\alpha \lambda
        \langle g \rangle_{B,P} \frac{\partial}{\partial t} \langle u \rangle_{B,P} \\ = h^\alpha \frac{\partial}{\partial t} \lambda \langle g \rangle_{B,P} \langle u \rangle_{B,P} + h^\alpha
        \lambda (\langle - \frac{1}{ih} \frac{1}{2}\frac{\partial}{\partial t} B_{ij}X^i X^j g\rangle_{B,P} +\langle \frac{1}{ih} \frac{1}{2} \frac{\partial}{\partial t} B_{ij}X^i X^j\rangle_{B,P} \langle g\rangle_{B,P} ) \langle u
        \rangle_{B,P}\\ + h^\alpha \lambda \langle g \rangle_{B,P} (\langle \frac{\partial}{\partial t} P_a^b c^a \bar{c}_b u \rangle_{B,P} - \langle \frac{\partial}{\partial t} P_a^b c^a \bar{c}_b \rangle_{B,P} \langle u \rangle_{B,P})\\ =\langle
        \frac{\partial}{\partial t} \theta - \frac{1}{ih} \frac{1}{2}\frac{\partial}{\partial t} B_{ij}X^i X^j \theta +c^a\bar{c}_b \frac{\partial}{\partial t} P_a^b \theta \rangle_{B,P}\\ +\langle \frac{1}{ih} \frac{1}{2} \frac{\partial}{\partial t} B_{ij}X^i X^j -c^a\bar{c}_b \frac{\partial}{\partial t} P_a^b \rangle_{B,P}\langle \theta \rangle_{B,P}\\ =\langle \frac{\partial}{\partial t} \theta - \frac{1}{ih} \frac{1}{2}\frac{\partial}{\partial t} B_{ij}X^i X^j \theta +c^a\bar{c}_b \frac{\partial}{\partial t} P_a^b \theta \rangle_{B,P}\\ +(\frac{1}{2}\frac{\partial}{\partial t} B_{ij} (B^{-1})^{ij} - (P^{-1})^a_b\frac{\partial}{\partial t} P_a^b)\langle \theta \rangle_{B,P}
    \end{gather*}

    By linearity and continuity in the filtration, the identity holds for every $\theta\in R$.

\end{proof}

\section{Main Theorem}\label{Main Theorem}
We now collect the hypotheses and prove the main theorem.

\begin{theorem}[]\label{algebraicindependence}
    (purely algebraic version)

    Adopt the notation of Example~\ref{exampleoffp1}, with the following additional assumptions:
    \begin{itemize}
        \item All data depend smoothly on $t$; equivalently,
              $A=C^{\infty}(I,\mathbb{C})$, where $I$ is an open interval with coordinate $t$.
        \item The formal vector fields $D_a^k \partial_k$ define an action of the unimodular $m$-dimensional Lie
              algebra $\mathfrak{g}$ with structure constants $f_{ab}^c$.
        \item $f$ is invariant and $g$ is an invariant density.
        \item There exists $W^k \in A[[x^1,...,x^n]]$ for $k = 1,\dots,n$ describing the
              the deformation induced by the chosen coordinates; namely,
              \begin{gather*}
                  \frac{\partial}{\partial t} f = W^k\partial_k f, \quad \frac{\partial}{\partial t} g = \partial_k(W^k g), \quad \frac{\partial}{\partial t} D_a^l = W^k\partial_k D_a^l - D_a^k\partial_k W^l
              \end{gather*}
              We also set $W^k = 0$ for $k = n+1,...,n+m$.
    \end{itemize}

    Then
    \begin{gather*}
        \frac{\partial}{\partial t} \langle \exp(\frac{i}{h}S_{int}) g \rangle_{B,P} = (\frac{1}{2}\frac{\partial}{\partial t} B_{ij} (B^{-1})^{ij} - (P^{-1})^a_b\frac{\partial}{\partial t} P_a^b)\langle \exp(\frac{i}{h}S_{int}) g \rangle_{B,P}
    \end{gather*}

    Equivalently, after choosing a smooth square root of $\det B$ on $I$,
    \begin{gather*}
        \frac{\partial}{\partial t}(\frac{\det P}{\sqrt{\det B}} \langle \exp(\frac{i}{h}S_{int}) g
        \rangle_{B,P}) = 0
    \end{gather*}
\end{theorem}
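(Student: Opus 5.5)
By Proposition~\ref{deformation of expectation} applied to $\theta=\exp(\frac{i}{h}S_{int})g$, the derivative $\frac{\partial}{\partial t}\langle\theta\rangle_{B,P}$ is the scalar term from the statement plus
\begin{gather*}
\Big\langle \frac{\partial}{\partial t}\theta - \frac{1}{2ih}\frac{\partial}{\partial t}B_{ij}X^iX^j\theta + c^a\bar c_b\frac{\partial}{\partial t}P_a^b\theta\Big\rangle_{B,P}.
\end{gather*}
Since $\frac{i}{h}S = \frac{i}{h}A_0+\frac{i}{h}\frac12 B_{ij}X^iX^j + P_a^bc^a\bar c_b + \frac{i}{h}S_{int}$, this extra term equals $\langle(\frac{\partial}{\partial t}(\frac{i}{h}S) - \frac{i}{h}\frac{\partial A_0}{\partial t})\exp(\frac{i}{h}S_{int})g + \exp(\frac{i}{h}S_{int})\frac{\partial g}{\partial t}\rangle_{B,P}$. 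The plan is therefore to show
\begin{gather*}
\Big\langle \Big(\tfrac{\partial}{\partial t}(\tfrac{i}{h}S)-\tfrac{i}{h}\tfrac{\partial A_0}{\partial t}\Big)e^{\frac ih S_{int}}g + e^{\frac ih S_{int}}\tfrac{\partial g}{\partial t}\Big\rangle_{B,P}=0 .
\end{gather*}
First, $A_0$ is the value of $f$ at the critical point $X=0$ (with $\phi^a(0)$ paired with $\lambda_a$). Because $\partial_t f = W^k\partial_k f$ and $df(0)=0$, we get $\partial_t A_0=0$, so that term drops.

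Next I decompose $\partial_t(\frac ih S)$. The part $\frac ih W^k\partial_k f$ combines with $\partial_t g=\partial_k(W^kg)$ into a total derivative. Using Proposition~\ref{identityofpartialk} with $\theta = W^k e^{\frac ih S_{int}}g$, which holds in the form $\langle\partial_k(\Theta)\rangle = \langle\frac{1}{ih}B_{kj}X^j\Theta\rangle$ and gives $\langle \partial_k(W^k g e^{\frac ih S - \text{free}})\rangle=0$ when written relative to the full exponent, I get
\begin{gather*}
\Big\langle \partial_k\big(W^k g\big)e^{\frac ih S_{int}} + W^k g\,\partial_k\big(\tfrac ih S_{int}\big)e^{\frac ih S_{int}} - \tfrac{1}{ih}B_{kj}X^jW^k e^{\frac ih S_{int}}g\Big\rangle_{B,P}=0 .
\end{gather*}
Since $\partial_k(\frac ih S_{int}) - \frac1{ih}B_{kj}X^j$ is $\partial_k$ of the full exponent, this says the full Lie derivative $W^k\partial_k(\frac ih S)$ integrates to zero against $g$. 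The remainder of $\partial_t(\frac ih S)$ is
\begin{gather*}
\tfrac ih\lambda_a(\partial_t\phi^a - W^k\partial_k\phi^a) + c^a\bar c_b\big(\partial_t L_a^b - W^k\partial_k L_a^b\big) - c^a\bar c_b W^k\partial_k L_a^b .
\end{gather*}
Set $\psi^a := \partial_t\phi^a - W^k\partial_k\phi^a$. Using the given formula for $\partial_t D_a^l$, one should find $\partial_t L_a^b - W^k\partial_k L_a^b = D_a^k\partial_k\psi^b$, so that the $-c^a\bar c_b W^k\partial_k L_a^b$ term is exactly absorbed by the previous step.

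The remaining terms, $\frac ih\lambda_a\psi^a + c^a\bar c_b D_a^k\partial_k\psi^b$, form the BRST variation of the gauge fermion $\bar c_a\psi^a$. The argument is to apply $Q=Q_C+Q_D$ with $M_a=\frac ih\lambda_a$ to $\Theta=\bar c_b\psi^b e^{\frac ih S_{int}}$, and add the identities of Propositions~\ref{identityofQ_C} (with $\rho=g$) and~\ref{identityofQ_D}. The source terms $-\frac1{ih}B_{ij}X^ic^aD_a^j$, $-\frac12 f c c P\bar c$, and $-c^aM_bP_a^b$ are precisely $-Q$ applied to the free exponent, because $f$ is invariant, $D_a$ acts only on $x$, and the Lie algebra relations give $Q$ of $L$. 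So the identities reduce to $\langle Q(\Theta\, e^{\text{full}-\text{free}})\, g\rangle=0$ with $Q(\frac ih S)=0$, leaving exactly
\begin{gather*}
\Big\langle\big(\tfrac ih\lambda_b\psi^b + c^a\bar c_b D_a^k\partial_k\psi^b\big)e^{\frac ih S_{int}}g\Big\rangle_{B,P}=0 .
\end{gather*}

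The main obstacle is verifying that $Q(\frac ih S)=0$ at the algebraic level. The $Q_C$ part of $c^a\bar c_bL_a^b$ must cancel against the $Q_D$ image $\frac ih\lambda_b\cdot$ from $\lambda\phi$. This needs the commutator $[D_a,D_b]=f_{ab}^cD_c$, and the $-\frac12 f$ ghost term must cancel $c^ac^d D_a D_d\phi$. Getting these signs right is the delicate point, along with confirming the $L$ transformation identity for $\psi$. I would check both carefully, treating the filtration and convergence as routine.
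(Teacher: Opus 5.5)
Your proposal is correct and follows the paper's proof. You reduce via Proposition~\ref{deformation of expectation} to $\langle e^{\frac ih S_{int}}(\partial_t(\frac ih S)g+\partial_t g)\rangle_{B,P}=0$, split $\partial_t(\frac ih S)=W^k\partial_k(\frac ih S)+\frac ih\lambda_a\psi^a+c^a\bar c_bD_a^k\partial_k\psi^b$, and remove the first piece by integration by parts with $W^kg$ and the second by $Q(\frac ih S)=0$ applied to the gauge fermion $\bar c_a\psi^a$, exactly as the paper does with $u^a=\psi^a$; your explicit check that $\partial_tA_0=0$ (from $df(0)=0$) fills a step the paper uses silently. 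When writing it out, fix three bookkeeping slips: the source terms in Propositions~\ref{identityofQ_C} and~\ref{identityofQ_D} are $+Q$ (not $-Q$) of the free exponent; after subtracting the full Lie derivative $W^k\partial_k(\frac ih S)$ there is no leftover $-c^a\bar c_bW^k\partial_kL_a^b$ term; and the resulting identity is $\langle e^{\frac ih S_{int}}(Q(\frac ih S)\bar c_a\psi^a+Q(\bar c_a\psi^a))g\rangle_{B,P}=0$, with no second exponential factor multiplying $\Theta$.
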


\begin{remark}
    Apart from preservation of the stated invertibility and smoothness
    assumptions, no restriction is imposed on the deformation of the
    gauge-fixing functions $\phi^a$; their variation cancels from the normalized expression.
\end{remark}

\begin{remark}
    The following lemma is the formal analog of
    \begin{gather*}
        \int Q_C(\exp(\frac{i}{h}S)\theta') g dX^1...dX^{n+m} dc^1...dc^m d\bar{c}_1...d\bar{c}_m  = 0
    \end{gather*}
\end{remark}

\begin{lemma}\label{Q_C lemma}
    For any $\theta' \in R$ and any invariant density $g$, we have
    \begin{gather*}
        \langle \exp(\frac{i}{h}S_{int})(Q_C(\frac{i}{h}S)\theta' + Q_C(\theta') ) g\rangle_{B,P} = 0
    \end{gather*}
\end{lemma}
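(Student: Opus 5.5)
The plan is to apply Proposition~\ref{identityofQ_C} with $\theta := \exp(\frac{i}{h}S_{int})\theta'$ and $\rho := g$. Then expand $Q_C(\theta)$ by the derivation property and identify the extra terms produced by Proposition~\ref{identityofQ_C} as $Q_C$ applied to the free part of the action.

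First note that $Q_C$ is an odd derivation and $\frac{i}{h}S_{int}$ is even. Hence $Q_C$ acts on $\exp(\frac{i}{h}S_{int})$ like $\partial_k$ does in the earlier proposition:
\begin{gather*}
Q_C(\exp(\tfrac{i}{h}S_{int})\theta')=\exp(\tfrac{i}{h}S_{int})\bigl(Q_C(\tfrac{i}{h}S_{int})\theta'+Q_C(\theta')\bigr).
\end{gather*}
This is checked on the truncations $G^k$ termwise, using $Q_C(\alpha^n)=n\alpha^{n-1}Q_C(\alpha)$ for even $\alpha$. One also checks that $Q_C$ preserves the filtration well enough, since it does not lower degree in $X$ and $h$. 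Proposition~\ref{identityofQ_C} then gives
\begin{gather*}
\Bigl\langle \exp(\tfrac{i}{h}S_{int})\Bigl(Q_C(\tfrac{i}{h}S_{int})\theta'+Q_C(\theta') -\tfrac{1}{ih}B_{ij}X^ic^aD_a^j\theta' -\tfrac12 f^a_{cd}c^cc^dP_a^b\bar c_b\theta'\Bigr)g\Bigr\rangle_{B,P}=0 .
\end{gather*}
Here I use that $\exp(\frac{i}{h}S_{int})$ is even, so it commutes past $c^a$.

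The remaining step is to show that the last two terms equal $Q_C$ of the free part, namely
\begin{gather*}
Q_C\bigl(\tfrac{i}{h}\tfrac12B_{ij}X^iX^j+P_a^bc^a\bar c_b\bigr) =\tfrac{i}{h}B_{ij}X^ic^aD_a^j-\tfrac12 f^a_{cd}c^cc^dP_a^b\bar c_b .
\end{gather*}
This is a direct computation. Since $Q_C(X^j)=c^aD_a^j$ and $B$ is symmetric, the first piece gives $\frac{i}{h}B_{ij}X^ic^aD_a^j$, and $\frac{i}{h}=-\frac{1}{ih}$ as required. Since $Q_C(\bar c_b)=0$, the second piece gives $P_a^bQ_C(c^a)\bar c_b=-\frac12 f^a_{cd}c^cc^dP_a^b\bar c_b$. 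The constant $A_0$ is killed by $Q_C$. Because $\frac{i}{h}S=\frac{i}{h}A_0+(\text{free part})+\frac{i}{h}S_{int}$, the bracket collapses to $Q_C(\frac{i}{h}S)\theta'+Q_C(\theta')$, which proves the lemma.

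I expect the main obstacle to be bookkeeping rather than ideas. The points to check are the sign conventions for the odd derivation, namely that $c^a$ is placed on the left consistently with $Q_C(X^k)=c^aD_a^k$, and the justification that the derivation rule on $\exp$ and Proposition~\ref{identityofQ_C} extend from monomials to the completed ring $R$. Both follow by working in each $G^k$, where only finitely many terms contribute, and then passing to the inverse limit.
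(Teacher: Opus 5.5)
Your proposal is correct and follows the paper's own proof essentially line for line: take $\theta=\exp(\frac{i}{h}S_{int})\theta'$ and $\rho=g$ in Proposition~\ref{identityofQ_C}, expand $Q_C(\theta)$ by the derivation rule, and recognize the two correction terms as $Q_C$ of the free part $\frac{i}{h}\frac12 B_{ij}X^iX^j+P_a^bc^a\bar c_b$ of $\frac{i}{h}S$. One small correction to your side remark: $Q_C$ \emph{does} lower degree by one, since $Q_C(X^k)=c^aD_a^k$ and $D_a^k(0)\neq 0$ for a locally free action (e.g.\ $D^\theta=1$ in the $U(1)$ example), but it still maps $F_l$ into $F_{l-1}$, and that is all you need for the derivation rule on $\exp$ and for extending the identities to $R$.
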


\begin{proof}
    For any $\theta' \in R$, let
    \begin{gather*}
        \theta = \exp(\frac{i}{h}S_{int})\theta' \quad \rho = g
    \end{gather*}
    Then
    \begin{gather*}
        Q_C(\theta) = \exp(\frac{i}{h}S_{int}) (Q_C(\frac{i}{h}S_{int})\theta' + Q_C(\theta'))\\
        = \exp(\frac{i}{h}S_{int})
        (Q_C(\frac{i}{h}(f + \lambda_a \phi^a - A_0 - \frac{1}{2}B_{ij}X^i X^j )
        +c^a\bar{c}_b(L_a^b - P_a^b))\theta' + Q_C(\theta'))\\
        = \exp(\frac{i}{h}S_{int}) ((Q_C(\frac{i}{h}S)
        +\frac{1}{ih}B_{ij}X^i c^a D_a^j + \frac{1}{2}f^a_{cd}c^c c^d P_a^b \bar{c}_b
        )\theta' + Q_C(\theta'))
    \end{gather*}
    Proposition~\ref{identityofQ_C} therefore gives
    \begin{gather*}
        0 = \langle ((
        -\frac{1}{ih} B_{ij} X^i c^a D_a^j
        - \frac{1}{2} f_{cd}^a c^c c^d P_a^b \bar{c}_b
        ) \theta
        + Q_C(\theta)) \rho \rangle_{B,P}\\
        =\langle (\exp(\frac{i}{h}S_{int})( -\frac{1}{ih} B_{ij} X^i c^a D_a^j
        - \frac{1}{2} f_{cd}^a c^c c^d P_a^b \bar{c}_b)\theta'\\
        +\exp(\frac{i}{h}S_{int}) ((Q_C(\frac{i}{h}S)
        +\frac{1}{ih}B_{ij}X^i c^a D_a^j + \frac{1}{2}f^a_{cd}c^c c^d P_a^b \bar{c}_b
        )\theta' + Q_C(\theta')))g\rangle_{B,P}\\
        =\langle \exp(\frac{i}{h}S_{int})(Q_C(\frac{i}{h}S)\theta' + Q_C(\theta') ) g\rangle_{B,P}
    \end{gather*}
\end{proof}
\begin{remark}
    The following lemma is the formal analog of
    \begin{gather*}
        \int Q_D(\exp(\frac{i}{h}S)\theta') g dX^1...dX^{n+m} dc^1...dc^m d\bar{c}_1...d\bar{c}_m  = 0
    \end{gather*}
\end{remark}
\begin{lemma}\label{Q_D lemma}
    For any $\theta' \in R$ and any invariant density $g$, we have
    \begin{gather*}
        \langle \exp(\frac{i}{h}S_{int})(Q_D(\frac{i}{h}S)\theta' + Q_D(\theta') ) g\rangle_{B,P} = 0
    \end{gather*}
\end{lemma}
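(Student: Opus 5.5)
The proof will follow the template of Lemma~\ref{Q_C lemma} exactly, with Proposition~\ref{identityofQ_D} in place of Proposition~\ref{identityofQ_C}. Given $\theta'\in R$, I set $\theta := \exp(\frac{i}{h}S_{int})\,\theta'\, g$. Here $g$ is even and ghost-independent, so $Q_D(g)=0$ (indeed $Q_D$ kills all $X^k$ and $c^a$). Applying Proposition~\ref{identityofQ_D} to this $\theta$ gives
\begin{gather*}
    \langle -c^aM_bP_a^b\exp(\tfrac{i}{h}S_{int})\theta' g + Q_D(\exp(\tfrac{i}{h}S_{int})\theta')\, g\rangle_{B,P}=0 .
\end{gather*}
The invariance of $g$ plays no role here; only its parity and ghost-independence are used.

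The first step is the chain rule $Q_D(\exp(\alpha))=\exp(\alpha)Q_D(\alpha)$ for even $\alpha\in F_1$. Since $Q_D$ is an odd derivation and $\alpha$ is even, $Q_D(\alpha^n)=n\alpha^{n-1}Q_D(\alpha)$ holds by the graded Leibniz rule. Summing over $n$ gives the claim, provided $Q_D$ is compatible with the inverse-limit filtration on $R$, so that it may be applied term by term.

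This compatibility is the point I expect to need the most care. $Q_D$ replaces $\bar c_a$ (degree $0$) by $M_a$, which may have negative degree because of powers of $h^{-1}$. In the intended application $M_a$ should be a multiple of $\lambda_a$, for instance $\frac{i}{h}\lambda_a$. In that case $Q_D$ lowers degree by at most a fixed amount, so it is continuous for the filtration $F_l$ and extends from monomials to all of $R$. In the general case I will make this hypothesis explicit, phrased as the requirement that $Q_D$ be well defined on $R$ as in its definition. With it, $Q_D(\exp(\frac{i}{h}S_{int})\theta') = \exp(\frac{i}{h}S_{int})\bigl(Q_D(\frac{i}{h}S_{int})\theta' + Q_D(\theta')\bigr)$.

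The remaining step is to compare $Q_D(\frac{i}{h}S_{int})$ with $Q_D(\frac{i}{h}S)$. From Example~\ref{exampleoffp1},
\begin{gather*}
    \tfrac{i}{h}S_{int}=\tfrac{i}{h}S-\tfrac{i}{h}A_0-\tfrac{i}{h}\tfrac12 B_{ij}X^iX^j - P_a^b c^a\bar c_b .
\end{gather*}
Here $Q_D$ annihilates $A_0$ and the quadratic term. For the ghost term, $Q_D(P_a^bc^a\bar c_b) = -P_a^b c^a M_b$, since moving $Q_D$ past the odd $c^a$ produces a sign. Hence $Q_D(\frac{i}{h}S_{int}) = Q_D(\frac{i}{h}S) + c^aM_bP_a^b$.

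Substituting this into the displayed identity, the term $c^aM_bP_a^b\exp(\frac{i}{h}S_{int})\theta'g$ cancels against the first term there. What remains is exactly $\langle \exp(\frac{i}{h}S_{int})(Q_D(\frac{i}{h}S)\theta'+Q_D(\theta'))g\rangle_{B,P}=0$. Apart from the continuity issue above, the only delicate bookkeeping is the sign in $Q_D(c^a\bar c_b)$; it must match the sign convention in Proposition~\ref{identityofQ_D}, where the $-c^aM_bP_a^b$ term arose from expanding along $\bar c_{t_k}$.
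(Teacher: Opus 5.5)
Your proposal is correct and follows the paper's proof essentially verbatim. You set $\theta=\exp(\frac{i}{h}S_{int})\theta' g$, compute $Q_D(\frac{i}{h}S_{int})=Q_D(\frac{i}{h}S)+c^aM_bP_a^b$, and apply Proposition~\ref{identityofQ_D} so that the $c^aM_bP_a^b$ terms cancel. Your side remarks spell out points the paper leaves implicit:
\begin{itemize}
\item $Q_D(g)=0$;
\item the invariance of $g$ is never used;
\item $Q_D$ must extend continuously to $R$, which holds for $M_a=\frac{i}{h}\lambda_a$ because it lowers degree by exactly one.
\end{itemize}
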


\begin{proof}
    For any $\theta' \in R$, let
    \begin{gather*}
        \theta = \exp(\frac{i}{h}S_{int})\theta' g
    \end{gather*}
    Then
    \begin{gather*}
        Q_D(\theta) = \exp(\frac{i}{h}S_{int}) (Q_D(\frac{i}{h}S_{int})\theta' + Q_D(\theta'))g\\
        = \exp(\frac{i}{h}S_{int})
        (Q_D(\frac{i}{h}(f + \lambda_a \phi^a - A_0 - \frac{1}{2}B_{ij}X^i X^j )
        +c^a\bar{c}_b(L_a^b - P_a^b))\theta' + Q_D(\theta'))g\\
        = \exp(\frac{i}{h}S_{int}) ((Q_D(\frac{i}{h}S) + c^a M_b P_a^b)\theta' + Q_D(\theta'))g
    \end{gather*}
    Proposition~\ref{identityofQ_D} therefore gives
    \begin{gather*}
        0 = \langle -c^a M_b P_a^b \theta
        + Q_D(\theta)\rangle_{B,P}\\
        =\langle  \exp(\frac{i}{h}S_{int})(-c^a M_b P_a^b)\theta' g
        + \exp(\frac{i}{h}S_{int}) ((Q_D(\frac{i}{h}S) + c^a M_b P_a^b)\theta' + Q_D(\theta'))g\rangle_{B,P}\\
        = \langle \exp(\frac{i}{h}S_{int})(Q_D(\frac{i}{h}S)\theta' + Q_D(\theta') ) g\rangle_{B,P}
    \end{gather*}
\end{proof}

\begin{remark}
    The following lemma is the formal analog of
    \begin{gather*}
        \int \partial_k(\exp(\frac{i}{h}S)\theta') dX^1...dX^{n+m} dc^1...dc^m
        d\bar{c}_1...d\bar{c}_m = 0
    \end{gather*}
\end{remark}

\begin{lemma}
    For any $\theta' \in R$, we have
    \begin{gather*}
        \langle \exp(\frac{i}{h}S_{int})(\partial_k(\frac{i}{h}S)\theta' + \partial_k(\theta') ) \rangle_{B,P} = 0
    \end{gather*}
\end{lemma}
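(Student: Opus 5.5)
We follow the same pattern as Lemmas~\ref{Q_C lemma} and~\ref{Q_D lemma}, now using Proposition~\ref{identityofpartialk} as the underlying identity. Fix $\theta'\in R$ and set
\begin{gather*}
    \theta := \exp(\tfrac{i}{h}S_{int})\theta' .
\end{gather*}
Proposition~\ref{identityofpartialk} applied to this $\theta$ (with index $k$ in place of $j$) gives
\begin{gather*}
    \left\langle -\frac{1}{ih}B_{ik}X^i\theta + \partial_k\theta\right\rangle_{B,P}=0 .
\end{gather*}
The remaining work is to expand $\partial_k\theta$ and check that the Gaussian term cancels.

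The first step is to compute $\partial_k\theta$ using the Leibniz rule and the rule $\partial_k\exp(\alpha)=\exp(\alpha)\partial_k\alpha$ for $\alpha\in F_1$, both stated earlier. This gives
\begin{gather*}
    \partial_k\theta=\exp(\tfrac{i}{h}S_{int})\left(\partial_k(\tfrac{i}{h}S_{int})\theta'+\partial_k\theta'\right).
\end{gather*}
The hypothesis $\alpha\in F_1$ holds because Example~\ref{exampleoffp1} records $\frac{i}{h}S_{int}\in F_1$.

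The second step is to compare $\partial_k(\frac{i}{h}S_{int})$ with $\partial_k(\frac{i}{h}S)$. By definition,
\begin{gather*}
    \frac{i}{h}S_{int}=\frac{i}{h}S-\frac{i}{h}A_0-\frac{i}{h}\frac12 B_{ij}X^iX^j-P_a^b c^a\bar c_b .
\end{gather*}
The constant $A_0$ and the ghost term $P_a^b c^a\bar c_b$ are killed by $\partial_k$, since $\partial_k$ acts only on the $X$ variables and $P$ has entries in $A$. Using the symmetry of $B$,
\begin{gather*}
    \partial_k\left(\tfrac{i}{h}\tfrac12 B_{ij}X^iX^j\right)=\tfrac{i}{h}B_{ik}X^i=-\tfrac{1}{ih}B_{ik}X^i .
\end{gather*}
Hence
\begin{gather*}
    \partial_k(\tfrac{i}{h}S_{int})=\partial_k(\tfrac{i}{h}S)+\tfrac{1}{ih}B_{ik}X^i .
\end{gather*}

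Substituting this into the identity from Proposition~\ref{identityofpartialk}, the term $\frac{1}{ih}B_{ik}X^i\exp(\frac{i}{h}S_{int})\theta'$ cancels exactly against $-\frac{1}{ih}B_{ik}X^i\theta$. What remains is
\begin{gather*}
    \left\langle \exp(\tfrac{i}{h}S_{int})\left(\partial_k(\tfrac{i}{h}S)\theta'+\partial_k\theta'\right)\right\rangle_{B,P}=0 ,
\end{gather*}
which is the claim.

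No invariant density enters, so nothing beyond Proposition~\ref{identityofpartialk} is required. The only delicate point is the sign and factor bookkeeping in $\frac{i}{h}=-\frac{1}{ih}$ together with the symmetry of $B$. The identity $\frac{i}{h}=-\frac{1}{ih}$ holds because $i^2=-1$, and the symmetry of $B$ is what turns $\frac12\partial_k(B_{ij}X^iX^j)$ into $B_{ik}X^i$. A secondary point is that $\partial_k(\frac{i}{h}S)$ contains $h^{-1}$ and is therefore not in $F_1$. This causes no problem: by the cancellation above it equals $\partial_k(\frac{i}{h}S_{int})-\frac{1}{ih}B_{ik}X^i$, and both of these lie in $R$, so every product appearing in the final expression is a well-defined element of $R$.
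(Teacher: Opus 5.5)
Your proof is correct and follows the paper's own argument: apply Proposition~\ref{identityofpartialk} to $\theta=\exp(\frac{i}{h}S_{int})\theta'$, expand $\partial_k\theta$ with the Leibniz and exponential rules, and note that $\partial_k$ kills $A_0$ and $P_a^b c^a\bar c_b$, while the derivative of the quadratic term $-\frac{i}{h}\frac12 B_{ij}X^iX^j$ cancels $-\frac{1}{ih}B_{ik}X^i\theta$. Your extra remarks, that $\frac{i}{h}S_{int}\in F_1$ and that every product involved lies in $R$, are correct bookkeeping that the paper leaves implicit.
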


\begin{proof}
    By \ref{identityofpartialk}, we have
    \begin{gather*}
        0 = \langle \frac{i}{h}B_{jk}X^j
        \exp(\frac{i}{h}S_{int})\theta'
        +\partial_k(\exp(\frac{i}{h}S_{int})\theta')\rangle_{B,P}\\ =\langle
        \frac{i}{h}B_{jk}X^j \exp(\frac{i}{h}S_{int})\theta'\\
        +\exp(\frac{i}{h}S_{int})(\partial_k (\frac{i}{h}S) + \partial_k ( - A_0 -\frac{1}{2}\frac{i}{h}B_{ij}X^i X^j - c^a \bar{c}_b
        P_a^b))\theta' +\exp(\frac{i}{h}S_{int})\partial_k(\theta')\rangle_{B,P}\\ =\langle \exp(\frac{i}{h}S_{int})(\partial_k(\frac{i}{h}S)\theta' + \partial_k(\theta') ) \rangle_{B,P}
    \end{gather*}
\end{proof}

From now on we set
\begin{gather*}
    M_a = \frac{i}{h}\lambda_a
\end{gather*}

We now record the key identity.
\begin{remark}
    With the BRST differential $Q=Q_C+Q_D$, the following lemma is the formal analogue of
    \begin{gather*}
        \int \exp(\frac{i}{h}S)Q(\theta') g dX^1...dX^{n+m} dc^1...dc^m d\bar{c}_1...d\bar{c}_m   = 0
    \end{gather*}
\end{remark}

\begin{lemma}
    For any $\theta' \in R$ and any invariant density $g$, we have
    \begin{gather*}
        \langle \exp(\frac{i}{h}S_{int})(Q_C(\theta') + Q_D(\theta') ) g\rangle_{B,P} = 0
    \end{gather*}
\end{lemma}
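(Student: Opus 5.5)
The plan is to add Lemma~\ref{Q_C lemma} and Lemma~\ref{Q_D lemma}, both applied to the same $\theta'$ and the same invariant density $g$, with $M_a=\frac{i}{h}\lambda_a$. By linearity of $\langle\cdot\rangle_{B,P}$ the sum gives
\begin{gather*}
    \langle \exp(\tfrac{i}{h}S_{int})\bigl((Q_C+Q_D)(\tfrac{i}{h}S)\,\theta' + (Q_C+Q_D)(\theta')\bigr) g\rangle_{B,P} = 0 .
\end{gather*}
So the statement reduces to the BRST invariance of the full action, $(Q_C+Q_D)(\frac{i}{h}S)=0$ in $R$. The remaining work is to verify this identity term by term, using $\frac{i}{h}S=\frac{i}{h}f+\frac{i}{h}\lambda_a\phi^a+c^a\bar c_b L_a^b$.

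For the bosonic terms the computation is short.
\begin{itemize}
    \item Since $f$ is invariant and ghost-free, $Q_C(f)=c^aD_a^k\partial_kf=0$ and $Q_D(f)=0$.
    \item Since $D_a^k=0$ for the $\lambda$-directions, $Q_C(\lambda_b\phi^b)=\lambda_b c^a D_a^k\partial_k\phi^b=\lambda_b c^aL_a^b$, while $Q_D(\lambda_b\phi^b)=0$. This yields the term $\frac{i}{h}\lambda_b c^aL_a^b$.
    \item For the ghost term, the graded Leibniz rule gives $Q_D(c^a\bar c_bL_a^b)=-c^aM_bL_a^b=-\frac{i}{h}\lambda_b c^aL_a^b$.
\end{itemize}
The last two contributions cancel exactly. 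This is the algebraic form of the statement that the gauge-fixing term is $Q_D$-exact up to the ghost term.

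What remains is the ghost-cubic part, and this is where I expect the main obstacle. It equals $Q_C(c^a\bar c_bL_a^b)$, which expands to
\begin{gather*}
    -\tfrac12 f^a_{cd}c^cc^d\bar c_bL_a^b \;-\; c^a\bar c_b\, c^eD_e^k\partial_k L_a^b .
\end{gather*}
The plan is to move $\bar c_b$ to the right, which turns the second term into $c^ac^e\bar c_b\, D_e D_a\phi^b$. Here I use that $L_a^b=D_a\phi^b$ as an element of $A[[x]]$. Because $c^ac^e$ is antisymmetric, only $\frac12 c^ac^e[D_e,D_a]\phi^b$ survives. Using that the $D_a$ realize $\mathfrak g$, this commutator is $\pm f^d_{ea}D_d\phi^b=\pm f^d_{ea}L_d^b$, and the result is compared with the first term after relabeling indices.

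The delicate point is the sign. Whether $[D_a,D_b]=f^c_{ab}D_c$ or $-f^c_{ab}D_c$ depends on whether the action by vector fields is a homomorphism or an antihomomorphism. I would fix this convention as the one that makes $Q_C^2=0$ on the generators $X^k$; that check is the same computation as $Q_C^2(X^k)=0$. With this convention the two cubic terms cancel.

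Once $(Q_C+Q_D)(\frac{i}{h}S)=0$ is established, including that every expression lies in $R$ and the degree filtration is respected, the displayed sum collapses to $\langle \exp(\frac{i}{h}S_{int})(Q_C(\theta')+Q_D(\theta'))g\rangle_{B,P}=0$, which is the claim.
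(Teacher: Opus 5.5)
Your reduction is the paper's: add Lemmas~\ref{Q_C lemma} and~\ref{Q_D lemma} and reduce to $Q_C(\frac{i}{h}S)=-Q_D(\frac{i}{h}S)$. Your handling of the $f$ term is correct, as is the cancellation of $\frac{i}{h}\lambda_b c^aL_a^b$ against $Q_D(c^a\bar c_bL_a^b)=-c^aM_bL_a^b$.

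The problem is the ghost-cubic term, which the paper only asserts and you try to verify. Your expansion of $Q_C(c^a\bar c_bL_a^b)$ has the wrong sign on its second term. With the graded Leibniz rule as defined, $Q_C(\bar c_bL_a^b)=-\bar c_b\,Q_C(L_a^b)$. This minus combines with the one from passing $Q_C$ across $c^a$:
\[
Q_C(c^a\bar c_bL_a^b)=Q_C(c^a)\,\bar c_bL_a^b-c^a\,Q_C(\bar c_bL_a^b)=-\tfrac{1}{2}f^a_{cd}c^cc^d\bar c_bL_a^b+c^a\bar c_b\,c^eD_e^k\partial_kL_a^b .
\]
This agrees with the paper's display. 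The slip is not harmless, because the commutator sign is not yours to choose. The hypothesis is $[D_a,D_b]=f^c_{ab}D_c$, and this is exactly the convention for which $Q_C^2(X^k)=0$ under the paper's left Leibniz rule. Under that convention your second term equals $-\tfrac{1}{2}f^a_{cd}c^cc^d\bar c_bL_a^b$. The two cubic terms then add to $-f^a_{cd}c^cc^d\bar c_bL_a^b$, which does not vanish for non-Abelian $\mathfrak g$ because $L$ is invertible. So, as written, your two claims are inconsistent: that the convention is the one giving $Q_C^2=0$, and that with it the cubic terms cancel.

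With the sign corrected, moving $\bar c_b$ to the right gives
$-c^ac^e\bar c_b\,D_eD_a\phi^b=-\tfrac{1}{2}f^d_{ea}c^ac^e\bar c_bL_d^b=+\tfrac{1}{2}f^a_{cd}c^cc^d\bar c_bL_a^b$.
This cancels the first term, and the argument is then exactly the paper's.

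A cleaner route avoids the sign bookkeeping and makes your remark about $Q_C^2(X^k)=0$ literally true. Since $Q_C(\phi^b)=c^aL_a^b$, you have $c^a\bar c_bL_a^b=-\bar c_b\,Q_C(\phi^b)$, hence $Q_C(c^a\bar c_bL_a^b)=\bar c_b\,Q_C^2(\phi^b)$. Now $Q_C^2$ is an even derivation, so on ghost-free elements it acts as $\sum_k Q_C^2(X^k)\partial_k$. It vanishes on each $x$-coordinate by $[D_a,D_b]=f^c_{ab}D_c$, and trivially on the $\lambda$-directions, where $D_a^k=0$. Hence $Q_C^2(\phi^b)=0$.
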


\begin{proof}
    Notice
    \begin{gather*}
        Q_C(\frac{i}{h}S)
        =\frac{i}{h} c^a D_a^k\partial_k f + \frac{i}{h} \lambda_a c^b D_b^k\partial_k \phi^a -\frac{1}{2}f^a_{cd}c^c c^d \bar{c}_b D_a^k\partial_k \phi^b +c^a\bar{c}_b c^c D_c^l\partial_l D_a^k\partial_k \phi^b\\ = \frac{i}{h} \lambda_a c^b D_b^k\partial_k \phi^a = -Q_D(\frac{i}{h}S)
    \end{gather*}

    Adding the preceding two lemmas gives
    \begin{gather*}
        0 =\langle \exp(\frac{i}{h}S_{int})(Q_C(\frac{i}{h}S)\theta' + Q_C(\theta') + Q_D(\frac{i}{h}S)\theta' + Q_D(\theta') ) g\rangle_{B,P} \\
        =\langle \exp(\frac{i}{h}S_{int})(Q_C(\theta') + Q_D(\theta') ) g\rangle_{B,P}
    \end{gather*}
\end{proof}

Now we are ready to prove the main theorem.

\begin{remark}
    The integral interpretation gives a more intuitive version of the argument;
    it is recorded in the appendix as a formal heuristic.
\end{remark}

\begin{proof}
    Let
    \begin{gather*}
        u^a = \frac{\partial}{\partial t} \phi^a - W^k\partial_k\phi^a
    \end{gather*}
    Then

    \begin{gather*}
        \frac{\partial}{\partial t} \langle \exp(\frac{i}{h}S_{int}) g \rangle_{B,P} - (\frac{1}{2}\frac{\partial}{\partial t} B_{ij} (B^{-1})^{ij} - (P^{-1})^a_b\frac{\partial}{\partial t} P_a^b)\langle \exp(\frac{i}{h}S_{int}) g \rangle_{B,P}\\ =\langle
        \frac{\partial}{\partial t} (\exp(\frac{i}{h}S_{int}) g) - \frac{1}{ih} \frac{1}{2}\frac{\partial}{\partial t} B_{ij}X^i X^j \exp(\frac{i}{h}S_{int}) g +c^a\bar{c}_b \frac{\partial}{\partial t} P_a^b \exp(\frac{i}{h}S_{int}) g \rangle_{B,P}\\ =\langle
        \exp(\frac{i}{h}S_{int}) (\frac{\partial}{\partial t} (\frac{i}{h}S) g + \frac{\partial}{\partial t} g)\rangle_{B,P}\\ =\langle \exp(\frac{i}{h}S_{int})(\frac{\partial}{\partial t}(\frac{i}{h}(f + \lambda_a\phi^a) + c^a\bar{c}_b D_a^l \partial_l \phi^b) g + \frac{\partial}{\partial t} g)\rangle_{B,P}\\ =\langle \exp(\frac{i}{h}S_{int}) ((\frac{i}{h}(W^k\partial_k f + \lambda_a W^k \partial_k \phi^a + \lambda_a u^a ) \\ + c^a \bar{c}_b ((W^k\partial_k D_a^l- D_a^k\partial_k W^l)\partial_l \phi^b +D_a^l\partial_l( W^k\partial_k \phi^b + u^b)))g + \partial_k(W^k g))\rangle_{B,P}\\ =\langle \exp(\frac{i}{h}S_{int})
        ((\frac{i}{h}(W^k\partial_k f + \lambda_a W^k \partial_k \phi^a) + c^a \bar{c}_b (W^k\partial_k D_a^l \partial_l \phi^b +D_a^l\partial_l( W^k\partial_k \phi^b )))g + \partial_k(W^k g))\rangle_{B,P}\\ +\langle \exp(\frac{i}{h}S_{int})
        (\frac{i}{h}\lambda_a u^a + c^a \bar{c}_b D_a^l\partial_l u^b)g\rangle_{B,P}\\ =\langle \exp(\frac{i}{h}S_{int})(\partial_k(\frac{i}{h}S)W^k g + \partial_k(W^k g))\rangle_{B,P} +\langle \exp(\frac{i}{h}S_{int}) (Q_D(\bar{c}_a u
        ^a) + Q_C(\bar{c}_a u ^a))g\rangle_{B,P}\\ =0
    \end{gather*}

    That is,
    \begin{gather*}
        \frac{\partial}{\partial t}(\frac{\det P}{\sqrt{\det B}} \langle \exp(\frac{i}{h}S_{int}) g
        \rangle_{B,P}) = 0
    \end{gather*}
\end{proof}

Now we translate this theorem into the language of smooth function, and write
it in a more convenient way. We give a simple case first. In this case, we
deform the gauge-fixing function $\phi$ while the saddle point remains at $0$.
All functions are assumed to be sufficiently smooth.

\begin{theorem}\label{deformation theorem 1}
    If the following conditions are satisfied:
    \begin{itemize}
        \item $U$ is an open subset of $\mathbb{R}^n$, with coordinates $x^1,\dots,x^n$, and $0\in U$.
        \item There are $m$ vector fields $D_a=D_a^k\frac{\partial}{\partial x^k}$,
              $a=1,\dots,m$, defining an action of a unimodular Lie algebra
              $\mathfrak{g}$ with structure constants $f_{ab}^c$ :
              \begin{gather*}
                  [D_a,D_b] = f_{ab}^c D_c
              \end{gather*}
        \item $f:U\to\mathbb{R}$ is invariant under $\mathfrak g$, and
              $g,g_0:U\to\mathbb{R}$ represent invariant densities; explicitly,
              i.e.
              \begin{gather*}
                  D_a^k\frac{\partial}{\partial x^k} f = D_a^k\frac{\partial}{\partial x^k} g +\frac{\partial}{\partial x^k} D_a^k g = D_a^k\frac{\partial}{\partial x^k} g_0 +\frac{\partial}{\partial x^k} D_a^k g_0 = 0
              \end{gather*}
        \item $I$ is an open interval with coordinate $t$, and
              $\phi^a:I\times U\to\mathbb{R}$, $a=1,\dots,m$, is smooth. Write
              $\phi_t^a:U\to\mathbb{R}$ for its restriction at $t$.
        \item For every $t\in I$, $\phi_t(0)=0$ and $df(0)=0$.
        \item The following (block) matrix
              \begin{gather*}
                  \begin{pmatrix}
                      \frac{\partial^2 f}{\partial x^i \partial x^j} & \frac{\partial \phi^a_t}{\partial x^i} \\ \frac{\partial \phi^b_t}{\partial x^j} & 0
                  \end{pmatrix},
                  \begin{pmatrix}
                      D_a^k\frac{\partial \phi^b_t}{ \partial x^k}
                  \end{pmatrix}
              \end{gather*}
              are invertible at $x=0$ for all $t$.
        \item $g_0(0)\ne0$.
        \item $f,g,g_0,D_a^k , \phi^a$ are identified as their Taylor series at the point $x^k = 0$.
    \end{itemize}

    \begin{gather*}
        \frac{\partial}{\partial t} (\frac{\langle \exp(\frac{i}{h}S_{int}) g \rangle_{B,P}}{\langle
            \exp(\frac{i}{h}S_{int})g_0 \rangle_{B,P}}) = 0
    \end{gather*}

\end{theorem}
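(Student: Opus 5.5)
Apply Theorem~\ref{algebraicindependence} separately to the numerator and to the denominator; the normalizing factor $\frac{1}{2}\frac{\partial}{\partial t}B_{ij}(B^{-1})^{ij}-(P^{-1})^a_b\frac{\partial}{\partial t}P_a^b$ is common to both and cancels in the logarithmic derivative of the quotient. First I check the hypotheses of Theorem~\ref{algebraicindependence} in this setting. Only $\phi$ depends on $t$; $f$, $g$, $g_0$ and $D_a$ do not. Hence I choose $W^k=0$ for all $k$, and the three deformation identities $\partial_t f=W^k\partial_k f$, $\partial_t g=\partial_k(W^kg)$, $\partial_t D_a^l=W^k\partial_kD_a^l-D_a^k\partial_kW^l$ hold trivially. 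The same holds with $g_0$ in place of $g$. With $W=0$ the whole deformation is carried by $u^a=\partial_t\phi^a$, which the proof of the main theorem absorbs into the BRST-exact term $Q_C(\bar c_au^a)+Q_D(\bar c_au^a)$.

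Next I verify the setup of Example~\ref{exampleoffp1}, taking $A=C^\infty(I,\mathbb C)$ with Taylor coefficients at $x=0$. Since $\phi_t(0)=0$ and $df(0)=0$, the point $X=0$ is critical for $f+\lambda_a\phi^a_t$. The Hessian $B$ in the variables $(x,\lambda)$ is exactly the displayed block matrix, which is symmetric and invertible for every $t$, so $B\in GL_{n+m}(A)$. Likewise $P_a^b=D_a^k\partial_k\phi^b_t(0)$ is invertible for each $t$, and its entries are smooth in $t$, so $P\in GL_m(A)$. Invertibility pointwise in $t$ gives invertibility over $A$ because the inverse's entries are smooth (Cramer's rule). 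The linear term of $f+\lambda_a\phi^a$ vanishes. The constant $A_0=f(0)$ is independent of $t$; in any case it has been removed from $S_{int}$. The invariance hypotheses on $f,g,g_0$ and the bracket relations pass to Taylor series. The Lie algebra is unimodular by assumption. So Theorem~\ref{algebraicindependence} yields, with $c(t)$ the common factor above,
\[
\partial_t N = c(t)\,N,\qquad \partial_t D = c(t)\,D,
\]
where $N=\langle \exp(\frac{i}{h}S_{int})g\rangle_{B,P}$ and $D=\langle \exp(\frac{i}{h}S_{int})g_0\rangle_{B,P}$.

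Then $\partial_t(N/D)=(\partial_tN\,D-N\,\partial_tD)/D^2=0$, provided the quotient makes sense in $A[[h]][h^{-1}]$. This is the step I expect to need the most care: showing that $D$ is invertible. The lowest-order term of $D$ comes from the degree-zero part. The term $\exp(\frac{i}{h}S_{int})$ contributes $1$ plus elements of $F_1$, and I need the expectation of those positive-degree terms to land in strictly positive powers of $h$. Each vertex of $\frac{i}{h}S_{int}$ has $X$-degree at least $3$ with one factor $h^{-1}$, or is a ghost vertex of $X$-degree at least $1$. By the degree bound in the Wick definition, $\langle F_l\rangle\subset h^{\lceil l/2\rceil}A[[h]]$ after accounting for the grading with $\deg h=2$, since the positive-degree part pairs off. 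Hence $D=g_0(0)+O(h^{1/2}\text{-order})$, more precisely $g_0(0)+hA[[h]]$ up to the degree convention. As $g_0(0)\neq0$ is constant in $t$, $D$ is a unit in $A[[h]]\subset A[[h]][h^{-1}]$, and the quotient rule applies, giving the claim.
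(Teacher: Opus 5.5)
Your proposal is correct and follows the paper's own proof. In both, you take $W^k=0$, check that the hypotheses of Theorem~\ref{algebraicindependence} hold for $g$ and for $g_0$ so the common normalizing factor cancels, and note that $\langle \exp(\frac{i}{h}S_{int})g_0\rangle_{B,P}=g_0(0)+O(h)$ is a unit in $A[[h]]$. Your aside about $h^{1/2}$-order terms is unnecessary: monomials with an odd number of $X$'s have zero expectation, so every nonconstant term lands in $hA[[h]]$, as you go on to say.
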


\begin{proof}
    It remains to verify the hypotheses of Theorem~\ref{algebraicindependence}.
    The invertibility conditions on $B$ and $P$ follow from the two matrix hypotheses.
    Since $f,g,D_a^k$ are independent of $t$,
    we can take $W^k = 0$.
    By definition of $\exp$,
    \begin{gather*}
        \exp(\frac{i}{h}S_{int}) = 1 + \text{elements in }F_1
    \end{gather*}
    So
    \begin{gather*}
        \langle \exp(\frac{i}{h}S_{int})g_0 \rangle_{B,P} = g_0(0) + O(h)
    \end{gather*}
    is invertible in $A[[h]]$ because $g_0(0)$ is nowhere zero on the interval
    under consideration. The result now follows from Theorem~\ref{algebraicindependence}.
    \begin{gather*}
        \frac{\partial}{\partial t} (\frac{\langle \exp(\frac{i}{h}S_{int}) g \rangle_{B,P}}{\langle
            \exp(\frac{i}{h}S_{int})g_0 \rangle_{B,P}}) =0
    \end{gather*}
\end{proof}

Now we come to the general case: when we deform the gauge fixing term $\phi$,
the saddle point $\tilde{y}$ moves, so the Taylor series of $f,g,D$ also
deform.

\begin{theorem}

    If the following conditions are satisfied:
    \begin{itemize}
        \item $U$ is an open set in $\mathbb{R}^n$ (with coordinate $y^1,...,y^n$ ).
        \item There are $m$ vector fields $D_a=D_a^k\frac{\partial}{\partial y^k}$,
              $a=1,\dots,m$, defining an action of a unimodular Lie algebra
              $\mathfrak{g}$ with structure constants $f_{ab}^c$ :
              \begin{gather*}
                  [D_a,D_b] = f_{ab}^c D_c
              \end{gather*}
        \item $f: U \to \mathbb{R}$ is invariant function under $\mathfrak{g}$,
              $g,g_0: U \to \mathbb{R}$ are invariant density under $\mathfrak{g}$,
              i.e.
              \begin{gather*}
                  D_a^k\frac{\partial f}{\partial y^k} =
                  D_a^k\frac{\partial g}{\partial y^k} +\frac{\partial D_a^k}{\partial y^k}g =
                  D_a^k\frac{\partial g_0}{\partial y^k} +\frac{\partial D_a^k}{\partial y^k}g_0 = 0
              \end{gather*}
        \item $I$ is an open interval parameterized by $t$, and there exists $\phi^a
                  : I \times U \to \mathbb{R}, a =1,\dots,m$. Denote $\phi_t^a : U \to \mathbb{R}$ to be
              the restriction of $\phi^a$ to $U$ at $t$.
        \item There exists $\tilde{y} : I \to U , t\mapsto \tilde{y}_t$. Write in
              coordinates, we have $\tilde{y}^k : I \to \mathbb{R} , t\mapsto \tilde{y}^k_t$.
        \item For every $t\in I$, $\phi_t(\tilde y_t)=0$ and $df(\tilde y_t)=0$.
        \item The following (block) matrix
              \begin{gather*}
                  \begin{pmatrix}
                      \frac{\partial^2 f}{\partial y^i \partial y^j} & \frac{\partial \phi^a_t}{\partial y^i} \\ \frac{\partial \phi^b_t}{\partial y^j} & 0
                  \end{pmatrix},
                  \begin{pmatrix}
                      D_a^k\frac{\partial \phi^b_t}{ \partial y^k}
                  \end{pmatrix}
              \end{gather*}
              are invertible at $y = \tilde{y}_t$ for all $t$.
        \item $g_0(\tilde y_t)\ne0$ for every $t\in I$.
        \item $f,g,g_0,D_a^k , \phi^a$ are identified with their Taylor series at the point $y^k = \tilde{y}^k_t$,
              equivalently, at $x^k=y^k-\tilde y_t^k=0$.
    \end{itemize}

    Then
    \begin{gather*}
        \frac{\partial}{\partial t} (\frac{\langle \exp(\frac{i}{h}S_{int}) g \rangle_{B,P}}{\langle
            \exp(\frac{i}{h}S_{int}) g_0\rangle_{B,P}}) = 0
    \end{gather*}
\end{theorem}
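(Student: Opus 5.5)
The plan is to reduce to Theorem~\ref{algebraicindependence} by working in the moving coordinates $x^k = y^k - \tilde y^k_t$. In these coordinates the Taylor series of $f$, $g$, $g_0$, $D_a^k$ and $\phi^a_t$ at $x=0$ become elements of $A[[x^1,\dots,x^n]]$ with $A=C^\infty(I,\mathbb C)$. Their $t$-dependence has two sources: the explicit dependence of $\phi_t$ on $t$, and the translation of the base point. The gauge-fixing variation never needs to be matched by $W$; it is absorbed into $u^a$ in the proof of Theorem~\ref{algebraicindependence}. So I will choose $W^k$ to account only for the motion of the saddle point.

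First I compute $\partial_t$ of a $t$-independent function $F(y)$ written in $x$-coordinates. We have $F_t(x) = F(x+\tilde y_t)$, so
\begin{gather*}
\partial_t F_t = \dot{\tilde y}^k_t\,\partial_k F_t .
\end{gather*}
I will therefore set $W^k = \dot{\tilde y}^k_t$, a constant in $x$, and $W^k=0$ for the $\lambda$ directions. Then $\partial_t f = W^k\partial_k f$ holds immediately. Since $W$ is constant, $\partial_k W^k = 0$, and $\partial_t g = W^k\partial_k g = \partial_k(W^k g)$, and likewise for $g_0$. Also $\partial_t D_a^l = W^k\partial_k D_a^l = W^k\partial_k D_a^l - D_a^k\partial_k W^l$, because $\partial_k W^l=0$. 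The bracket relations $[D_a,D_b]=f_{ab}^cD_c$, the invariance of $f$, and the invariance of $g$ and $g_0$ are coordinate-free pointwise identities, so they survive translation and Taylor expansion. For $\phi_t$ no condition is required, since the theorem permits arbitrary deformation of the gauge fixing.

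Next I check the remaining hypotheses of Example~\ref{exampleoffp1}. We have $\phi_t(\tilde y_t)=0$ and $df(\tilde y_t)=0$, so $f+\lambda_a\phi^a$ has vanishing linear part at $X=0$ and $S_{int}\in F_3$ as required. Its Hessian $B$ is exactly the stated block matrix evaluated at $\tilde y_t$, and $P = D_a^k\partial_k\phi^b_t(\tilde y_t)$; both are invertible by hypothesis. Their entries are smooth in $t$ provided $\tilde y$ is smooth. I will assume this, as the global convention ``all functions sufficiently smooth'' allows. If pressed, I would note that it follows from the implicit function theorem applied to the nondegenerate system $d(f+\lambda\phi_t)=0$, $\phi_t=0$. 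All Taylor coefficients then lie in $A$.

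Applying Theorem~\ref{algebraicindependence} to $g$ and to $g_0$ gives that each of $\langle \exp(\frac{i}{h}S_{int})g\rangle_{B,P}$ and $\langle \exp(\frac{i}{h}S_{int})g_0\rangle_{B,P}$ satisfies the same linear ODE $\partial_t Y = cY$, with the same coefficient
\begin{gather*}
c = \tfrac12\partial_tB_{ij}(B^{-1})^{ij} - (P^{-1})^a_b\partial_tP_a^b .
\end{gather*}
As in the proof of Theorem~\ref{deformation theorem 1}, the denominator is $g_0(\tilde y_t)+O(h)$, which is invertible in $A[[h]]$ because $g_0(\tilde y_t)\neq 0$. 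The quotient rule then gives a derivative of $(cY_1Y_2 - Y_1 cY_2)/Y_2^2 = 0$. The main obstacle is the bookkeeping that justifies treating the Taylor coefficients at a moving point as elements of $A$, together with the formula $\partial_t F_t = \dot{\tilde y}^k\partial_kF_t$ coefficientwise. That formula is just the chain rule applied to each derivative $\partial^\alpha F(\tilde y_t)$, which I expect to be routine once smoothness of $\tilde y$ is granted.
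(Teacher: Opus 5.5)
Your proposal is correct and matches the paper's proof: both set $W^k=\partial_t\tilde y^k_t$ (constant in $x$, zero in the $\lambda$-directions), verify the deformation identities for $f$, $g$, $g_0$, $D_a^l$ by the chain rule on Taylor coefficients, let the explicit variation of $\phi_t$ be absorbed into $u^a$, and conclude via Theorem~\ref{algebraicindependence} together with the invertibility of the $g_0$-denominator exactly as in Theorem~\ref{deformation theorem 1}. One minor caveat: your implicit-function-theorem aside makes $\tilde y$ smooth only if $\tilde y$ is already known to be continuous, since a transverse slice can meet the critical orbit in several points and a discontinuous $\tilde y$ could jump between them; but you, like the paper, simply assume $\tilde y$ is smooth, so the argument is unaffected.
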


\begin{proof}
    The hypotheses of Theorem~\ref{algebraicindependence} are satisfied. The
    proof is as above, except that $W^k$ is no longer zero.
    Since for $f$ identified as element in $A[[X^1,\dots,X^{n+m}]]$, we have
    \begin{gather*}
        \frac{\partial}{\partial t} f = \sum_{N\ge 0}\frac{1}{N!}\frac{\partial }{\partial t} (\frac{\partial^N f}{\partial y^{i_1}\dots\partial y^{i_N}}(\tilde{y}_t) ) x^{i_1}\dots x^{i_N}\\ = \sum_{N\ge
            0}\frac{1}{N!}\frac{\partial \tilde{y}^k_t}{\partial t} \frac{\partial^{N+1} f}{\partial y^k\partial y^{i_1}\dots\partial y^{i_N}}(\tilde{y}_t)x^{i_1}\dots x^{i_N}\\ =\frac{\partial \tilde{y}^k_t}{\partial t} \sum_{N\ge 0}\frac{1}{N!} \frac{\partial^{N+1} f}{\partial y^k\partial y^{i_1}\dots\partial y^{i_N}}(\tilde{y}_t)x^{i_1}\dots x^{i_N}\\ =\frac{\partial \tilde{y}^k_t}{\partial t} \frac{\partial}{\partial x^k} f
    \end{gather*}

    The corresponding pullback identities hold for the density and vector
    fields (including the divergence and commutator terms in
    Theorem~\ref{algebraicindependence}). It therefore suffices to take
    \begin{gather*}
        W^k = \frac{\partial \tilde{y}^k_t}{\partial t}
    \end{gather*}

    The preceding argument now applies.
\end{proof}

\section{Example and Feynman-Diagram Interpretation}\label{Example and Feynman Diagram Interpretation}

The calculation in the proof can be easily translated into the language of
Feynman diagrams. This translation indicates how the algebraic identities are
used in formal perturbative field theory.

To keep the calculation simple, we remain in the finite-dimensional setting
and illustrate the diagrammatic translation there.

Now suppose $M = \mathbb{R}^2 - \{0\}$, with (local) coordinate $(r,\theta)$.
The $U(1)$-action on $M$ is generated by the vector field $\partial_\theta$; thus
\begin{gather*}
    D^r = 0, D^\theta = 1
\end{gather*}
Let $f$ be an invariant function on $M$ and $g$ an invariant density. In these
local coordinates, their coefficient functions depend only on $r$.

Suppose that one critical orbit of $f$ is $r=r_0$ and that the gauge slice
$\phi=0$ intersects it at $\theta=0$. To calculate
\begin{gather*}
    \langle \exp(\frac{i}{h}S_{int}) g \rangle_{B,P}
\end{gather*}
up to $h^1$ order, we write $\tilde r = r - r_0$, and take the first few terms of Taylor series of $f,g,\phi$ at $\tilde r = 0,\theta = 0$.
For simplicity, we assume that the value of $f$ at $\tilde r = 0$ is $0$.
We also set $X^1 = \tilde r,X^2 = \theta,X^3 = \lambda$. Then

\begin{gather*}
    f = \frac 1 2 a_2 \tilde r^2 + \frac 1 6 a_3 \tilde r^3 + \frac 1 {24} a_4 \tilde r^4 + O(\tilde r^5)\\
    g = s_0 + s_1 \tilde r + s_2 \tilde r^2 + O(\tilde r^3)\\
    \phi = b_r \tilde r + b_\theta \theta + \frac 1 2 b_{rr} \tilde r^2 + b_{r\theta} \tilde r \theta
    + \frac 1 2 b_{\theta\theta} \theta^2 \\
    + \frac 1 6 b_{rrr} \tilde r^3 + \frac 1 2 b_{rr\theta} \tilde r^2 \theta
    + \frac 1 2 b_{r\theta\theta} \tilde r \theta^2 + \frac 1 6 b_{\theta\theta\theta} \theta^3 + \text{higher}
\end{gather*}

Therefore
\begin{gather*}
    L = D^r \partial_r \phi + D^\theta \partial_\theta \phi \\ = b_\theta + b_{\theta \theta} \theta + b_{r\theta} \tilde{r}
     + \frac 1 2 b_{rr\theta} \tilde r^2 + b_{r\theta\theta} \tilde r \theta +
    \frac 1 2 b_{\theta\theta\theta} \theta^2 + \text{higher}\\ \frac i h S = \frac
    i h f + c L \bar c\\ = \frac i h (\frac 1 2 a_2 \tilde r^2 + b_r \lambda \tilde{r}
     + b_\theta \lambda \theta) + c \bar c b_\theta + \frac i h S_{int}\\ B = \begin{pmatrix}
        a_2 & 0        & b_r      \\
        0   & 0        & b_\theta \\
        b_r & b_\theta & 0
    \end{pmatrix}\\
    P = b_\theta
\end{gather*}

Taking the inverses of $B$ and $P$ gives the Feynman rules for edges. The
result is the same as in Fig.12 of \cite{Reshetikhin_2010}, up to some
constants.
\begin{gather*}
    \langle \tilde r^2\rangle_{B,P} = ih \frac 1 {a_2},
    \langle\tilde r\theta \rangle_{B,P} = -ih\frac{b_r}{a_2b_\theta},
    \langle\tilde r\lambda\rangle_{B,P} = 0\\
    \langle \theta\theta \rangle_{B,P} = ih\frac{b_r^2}{a_2b_\theta^2},
    \langle \theta\lambda\rangle_{B,P} = ih\frac{1}{b_\theta},
    \langle \lambda\lambda\rangle_{B,P} = 0\\
    \langle c\bar c\rangle_{B,P} = \frac 1 {b_\theta}
\end{gather*}

The interaction vertices are given by $\frac{i}{h}S_{int}$:
\begin{gather*}
    \frac{i}{h}S_{int} = \frac{i}{h}(\frac 1 6 a_3 \tilde r^3 + \frac 1 {24} a_4 \tilde r^4\\
    + \frac 1 2 b_{rr} \lambda\tilde r^2 + b_{r\theta} \lambda\tilde r \theta
    + \frac 1 2 b_{\theta\theta} \lambda\theta^2 \\
    + \frac 1 6 b_{rrr} \lambda\tilde r^3 + \frac 1 2 b_{rr\theta} \lambda\tilde r^2 \theta
    + \frac 1 2 b_{r\theta\theta} \lambda\tilde r \theta^2 + \frac 1 6 b_{\theta\theta\theta} \lambda\theta^3  )\\
    + b_{\theta \theta}  c\bar c\theta + b_{r\theta}  c\bar c\tilde r
    + \frac 1 2 b_{rr\theta}  c\bar c\tilde r^2 + b_{r\theta\theta}  c\bar c\tilde r \theta + \frac 1 2 b_{\theta\theta\theta}  c\bar c\theta^2+ \text{higher}\\
\end{gather*}

They look like:

\tikzset{every picture/.style={line width=0.75pt}} 

\begin{figure}
    \scalebox{0.7}{
        \begin{tikzpicture}[x=0.75pt,y=0.75pt,yscale=-1,xscale=1]
            
            \draw    (149.83,45) -- (181.5,115) ;
            \draw    (181.5,115) -- (197.67,44.08) ;
            \draw    (181.5,115) -- (256.67,47.58) ;
            \draw    (338.33,46.5) -- (370,116.5) ;
            \draw    (370,116.5) -- (367.17,46.58) ;
            \draw    (370,116.5) -- (392.17,47.58) ;
            \draw    (370,116.5) -- (427.17,50.58) ;
            \draw    (76.33,126) -- (108,196) ;
            \draw    (108,196) -- (124.17,125.08) ;
            \draw    (108,196) -- (167.17,127.58) ;
            \draw    (252.83,124) -- (284.5,194) ;
            \draw    (284.5,194) -- (300.67,123.08) ;
            \draw    (284.5,194) -- (343.67,125.58) ;
            \draw    (422.83,125) -- (454.5,195) ;
            \draw    (454.5,195) -- (470.67,124.08) ;
            \draw    (454.5,195) -- (513.67,126.58) ;
            \draw    (46.33,220.17) -- (78,290.17) ;
            \draw    (78,290.17) -- (75.17,220.25) ;
            \draw    (78,290.17) -- (100.17,221.25) ;
            \draw    (78,290.17) -- (135.17,224.25) ;
            \draw    (198.33,219.17) -- (230,289.17) ;
            \draw    (230,289.17) -- (227.17,219.25) ;
            \draw    (230,289.17) -- (252.17,220.25) ;
            \draw    (230,289.17) -- (287.17,223.25) ;
            \draw    (357.83,217.17) -- (389.5,287.17) ;
            \draw    (389.5,287.17) -- (386.67,217.25) ;
            \draw    (389.5,287.17) -- (411.67,218.25) ;
            \draw    (389.5,287.17) -- (442.17,218.92) ;
            \draw    (499.83,215.17) -- (531.5,285.17) ;
            \draw    (531.5,285.17) -- (528.67,215.25) ;
            \draw    (531.5,285.17) -- (553.67,216.25) ;
            \draw    (531.5,285.17) -- (584.17,216.92) ;
            \draw  [dash pattern={on 0.84pt off 2.51pt}]  (97.33,323) -- (129,393) ;
            \draw  [dash pattern={on 0.84pt off 2.51pt}]  (129,393) -- (142.67,323.25) ;
            \draw    (129,393) -- (204.17,325.58) ;
            \draw  [dash pattern={on 0.84pt off 2.51pt}]  (331.33,322) -- (363,392) ;
            \draw  [dash pattern={on 0.84pt off 2.51pt}]  (363,392) -- (376.67,322.25) ;
            \draw    (363,392) -- (438.17,324.58) ;
            \draw  [dash pattern={on 0.84pt off 2.51pt}]  (97.83,426.83) -- (129.5,496.83) ;
            \draw  [dash pattern={on 0.84pt off 2.51pt}]  (129.5,496.83) -- (126.67,426.92) ;
            \draw    (129.5,496.83) -- (151.67,427.92) ;
            \draw    (129.5,496.83) -- (186.67,430.92) ;
            \draw  [dash pattern={on 0.84pt off 2.51pt}]  (256.83,426.33) -- (288.5,496.33) ;
            \draw  [dash pattern={on 0.84pt off 2.51pt}]  (288.5,496.33) -- (285.67,426.42) ;
            \draw    (288.5,496.33) -- (310.67,427.42) ;
            \draw    (288.5,496.33) -- (345.67,430.42) ;
            \draw  [dash pattern={on 0.84pt off 2.51pt}]  (450.83,424.83) -- (482.5,494.83) ;
            \draw  [dash pattern={on 0.84pt off 2.51pt}]  (482.5,494.83) -- (479.67,424.92) ;
            \draw    (482.5,494.83) -- (504.67,425.92) ;
            \draw    (482.5,494.83) -- (539.67,428.92) ;
            
            \draw (160,48.4) node [anchor=north west][inner sep=0.75pt]    {$r$};
            \draw (199.67,47.48) node [anchor=north west][inner sep=0.75pt]    {$r$};
            \draw (258,44.9) node [anchor=north west][inner sep=0.75pt]    {$r$};
            \draw (112.5,56.4) node [anchor=north west][inner sep=0.75pt]    {$\frac{i}{h} a_{3}$};
            \draw (348.5,49.9) node [anchor=north west][inner sep=0.75pt]    {$r$};
            \draw (370.67,47.48) node [anchor=north west][inner sep=0.75pt]    {$r$};
            \draw (394.5,47.9) node [anchor=north west][inner sep=0.75pt]    {$r$};
            \draw (301,57.9) node [anchor=north west][inner sep=0.75pt]    {$\frac{i}{h} a_{4}$};
            \draw (433,47.4) node [anchor=north west][inner sep=0.75pt]    {$r$};
            \draw (86.5,129.4) node [anchor=north west][inner sep=0.75pt]    {$\lambda $};
            \draw (126.17,128.48) node [anchor=north west][inner sep=0.75pt]    {$r$};
            \draw (170.5,125.4) node [anchor=north west][inner sep=0.75pt]    {$r$};
            \draw (39,137.4) node [anchor=north west][inner sep=0.75pt]    {$\frac{i}{h} b_{rr}$};
            \draw (263,127.4) node [anchor=north west][inner sep=0.75pt]    {$\lambda $};
            \draw (302.67,126.48) node [anchor=north west][inner sep=0.75pt]    {$r$};
            \draw (347,123.4) node [anchor=north west][inner sep=0.75pt]    {$\theta $};
            \draw (215.5,135.4) node [anchor=north west][inner sep=0.75pt]    {$\frac{i}{h} b_{r\theta }$};
            \draw (433,128.4) node [anchor=north west][inner sep=0.75pt]    {$\lambda $};
            \draw (385.5,136.4) node [anchor=north west][inner sep=0.75pt]    {$\frac{i}{h} b_{\theta \theta }$};
            \draw (475.5,125.9) node [anchor=north west][inner sep=0.75pt]    {$\theta $};
            \draw (518,126.4) node [anchor=north west][inner sep=0.75pt]    {$\theta $};
            \draw (78.67,221.15) node [anchor=north west][inner sep=0.75pt]    {$r$};
            \draw (102.5,221.57) node [anchor=north west][inner sep=0.75pt]    {$r$};
            \draw (9,231.57) node [anchor=north west][inner sep=0.75pt]    {$\frac{i}{h} b_{rrr}$};
            \draw (141,221.07) node [anchor=north west][inner sep=0.75pt]    {$r$};
            \draw (55.5,220.4) node [anchor=north west][inner sep=0.75pt]    {$\lambda $};
            \draw (230.67,220.15) node [anchor=north west][inner sep=0.75pt]    {$r$};
            \draw (254.5,220.57) node [anchor=north west][inner sep=0.75pt]    {$r$};
            \draw (161.5,230.57) node [anchor=north west][inner sep=0.75pt]    {$\frac{i}{h} b_{rr\theta }$};
            \draw (293,220.07) node [anchor=north west][inner sep=0.75pt]    {$\theta $};
            \draw (207.5,219.4) node [anchor=north west][inner sep=0.75pt]    {$\lambda $};
            \draw (390.17,218.15) node [anchor=north west][inner sep=0.75pt]    {$r$};
            \draw (320.5,228.57) node [anchor=north west][inner sep=0.75pt]    {$\frac{i}{h} b_{r\theta \theta }$};
            \draw (444.5,217.57) node [anchor=north west][inner sep=0.75pt]    {$\theta $};
            \draw (367,217.4) node [anchor=north west][inner sep=0.75pt]    {$\lambda $};
            \draw (413,218.07) node [anchor=north west][inner sep=0.75pt]    {$\theta $};
            \draw (462.5,226.57) node [anchor=north west][inner sep=0.75pt]    {$\frac{i}{h} b_{\theta \theta \theta }$};
            \draw (586.5,215.57) node [anchor=north west][inner sep=0.75pt]    {$\theta $};
            \draw (509,215.4) node [anchor=north west][inner sep=0.75pt]    {$\lambda $};
            \draw (555,216.07) node [anchor=north west][inner sep=0.75pt]    {$\theta $};
            \draw (532,216.07) node [anchor=north west][inner sep=0.75pt]    {$\theta $};
            \draw (107.5,326.4) node [anchor=north west][inner sep=0.75pt]    {$c$};
            \draw (142.17,325.48) node [anchor=north west][inner sep=0.75pt]    {$\overline{c}$};
            \draw (205.5,322.9) node [anchor=north west][inner sep=0.75pt]    {$r$};
            \draw (60,334.4) node [anchor=north west][inner sep=0.75pt]    {$b_{r\theta }$};
            \draw (341.5,325.4) node [anchor=north west][inner sep=0.75pt]    {$c$};
            \draw (376.17,324.48) node [anchor=north west][inner sep=0.75pt]    {$\overline{c}$};
            \draw (439.5,321.9) node [anchor=north west][inner sep=0.75pt]    {$\theta $};
            \draw (294,333.4) node [anchor=north west][inner sep=0.75pt]    {$b_{\theta \theta }$};
            \draw (154,428.23) node [anchor=north west][inner sep=0.75pt]    {$r$};
            \draw (61,438.23) node [anchor=north west][inner sep=0.75pt]    {$b_{rr\theta }$};
            \draw (192.5,427.73) node [anchor=north west][inner sep=0.75pt]    {$r$};
            \draw (220,436.23) node [anchor=north west][inner sep=0.75pt]    {$b_{r\theta \theta }$};
            \draw (399.5,434.23) node [anchor=north west][inner sep=0.75pt]    {$b_{\theta \theta \theta }$};
            \draw (106.5,425.4) node [anchor=north west][inner sep=0.75pt]    {$c$};
            \draw (131.17,426.48) node [anchor=north west][inner sep=0.75pt]    {$\overline{c}$};
            \draw (313,427.73) node [anchor=north west][inner sep=0.75pt]    {$r$};
            \draw (351.5,427.23) node [anchor=north west][inner sep=0.75pt]    {$\theta $};
            \draw (265.5,424.9) node [anchor=north west][inner sep=0.75pt]    {$c$};
            \draw (290.17,425.98) node [anchor=north west][inner sep=0.75pt]    {$\overline{c}$};
            \draw (507,426.23) node [anchor=north west][inner sep=0.75pt]    {$\theta $};
            \draw (545.5,425.73) node [anchor=north west][inner sep=0.75pt]    {$\theta $};
            \draw (459.5,423.4) node [anchor=north west][inner sep=0.75pt]    {$c$};
            \draw (484.17,424.48) node [anchor=north west][inner sep=0.75pt]    {$\overline{c}$};

            \end{tikzpicture}}    
            \caption{Interaction Vertices}
\end{figure}

To calculate through order $h^1$, it suffices to consider the fourteen vertices
above into consideration, which is the same as in Fig.13 of
\cite{Reshetikhin_2010}, up to some constants.

We now deform $\phi$, equivalently its Taylor coefficients. This changes
Taylor series of $\phi$. This will cause the deformation of $B,P$ and
$S_{int}$, as we stated in proposition \ref{deformation of expectation}. In
fact, Proposition~\ref{deformation of expectation} expresses this variation
diagrammatically as a correspondence between Feynman diagrams.

Because there are many diagrams, we illustrate the correspondence with one
explicit example. In the expansion of $\langle
    \exp(\frac{i}{h}S_{int})g\rangle_{B,P}$, there is a term
\begin{gather*}
    \langle \frac{i}{h}b_{r\theta}\lambda\tilde r \theta\cdot b_{\theta\theta} c \bar c \theta \cdot s_0\rangle_{B,P}
\end{gather*}

This term is the sum of three Feynman diagrams. One of them is

\begin{figure}[h]
    \scalebox{1.2}{
        \begin{tikzpicture}[x=0.75pt,y=0.75pt,yscale=-1,xscale=1]
            
            \draw   (58,83.23) .. controls (58,64.92) and (72.85,50.07) .. (91.17,50.07) .. controls (109.48,50.07) and (124.33,64.92) .. (124.33,83.23) .. controls (124.33,101.55) and (109.48,116.4) .. (91.17,116.4) .. controls (72.85,116.4) and (58,101.55) .. (58,83.23) -- cycle ;
            \draw  [dash pattern={on 0.84pt off 2.51pt}] (182.4,83.43) .. controls (182.4,65.12) and (197.25,50.27) .. (215.57,50.27) .. controls (233.88,50.27) and (248.73,65.12) .. (248.73,83.43) .. controls (248.73,101.75) and (233.88,116.6) .. (215.57,116.6) .. controls (197.25,116.6) and (182.4,101.75) .. (182.4,83.43) -- cycle ;
            \draw    (124.33,83.23) -- (182.4,83.43) ;
            
            \draw (128.8,72) node [anchor=north west][inner sep=0.75pt]  [font=\scriptsize]  {$\lambda $};
            \draw (111.6,64.4) node [anchor=north west][inner sep=0.75pt]  [font=\scriptsize]  {$r$};
            \draw (110,92.2) node [anchor=north west][inner sep=0.75pt]  [font=\scriptsize]  {$\theta $};
            \draw (167.67,71.87) node [anchor=north west][inner sep=0.75pt]  [font=\scriptsize]  {$\theta $};
            \draw (191.67,94.53) node [anchor=north west][inner sep=0.75pt]  [font=\scriptsize]  {$\overline{c}$};
            \draw (193,61.2) node [anchor=north west][inner sep=0.75pt]  [font=\scriptsize]  {$c$};

            \end{tikzpicture}}    
            \caption{Example Feynman diagram}
\end{figure}

This diagram contributes
\begin{gather*}
    \frac{i}{h}b_{r\theta} b_{\theta\theta} \langle\tilde r\theta \rangle_{B,P}\langle\lambda \theta \rangle_{B,P} \langle c \bar c \rangle_{B,P}
\end{gather*}

Now differentiate with respect to $t$ under the hypotheses of
Theorem~\ref{deformation theorem 1}. For simplicity, assume that $B$ and $P$
are constant in $t$. This gives two terms:
\begin{gather*}
    \frac{i}{h}\frac{d}{dt}b_{r\theta} b_{\theta\theta} \langle\tilde r\theta \rangle_{B,P}\langle\lambda \theta \rangle_{B,P} \langle c \bar c \rangle_{B,P}\\
    +\frac{i}{h}b_{r\theta}\frac{d}{dt} b_{\theta\theta} \langle\tilde r\theta \rangle_{B,P}\langle\lambda \theta \rangle_{B,P} \langle c \bar c \rangle_{B,P}
\end{gather*}

Each term can be represented by a Feynman diagram:

\begin{figure}[h]
    \scalebox{0.8}{
        \begin{tikzpicture}[x=0.75pt,y=0.75pt,yscale=-1,xscale=1]
            
            \draw   (97,107.17) .. controls (97,70.62) and (126.62,41) .. (163.17,41) .. controls (199.71,41) and (229.33,70.62) .. (229.33,107.17) .. controls (229.33,143.71) and (199.71,173.33) .. (163.17,173.33) .. controls (126.62,173.33) and (97,143.71) .. (97,107.17) -- cycle ;
            \draw  [dash pattern={on 4.5pt off 4.5pt}] (343,107.17) .. controls (343,70.62) and (372.62,41) .. (409.17,41) .. controls (445.71,41) and (475.33,70.62) .. (475.33,107.17) .. controls (475.33,143.71) and (445.71,173.33) .. (409.17,173.33) .. controls (372.62,173.33) and (343,143.71) .. (343,107.17) -- cycle ;
            \draw    (229.33,107.17) -- (343,107.17) ;
            \draw   (97,253.17) .. controls (97,216.62) and (126.62,187) .. (163.17,187) .. controls (199.71,187) and (229.33,216.62) .. (229.33,253.17) .. controls (229.33,289.71) and (199.71,319.33) .. (163.17,319.33) .. controls (126.62,319.33) and (97,289.71) .. (97,253.17) -- cycle ;
            \draw  [dash pattern={on 4.5pt off 4.5pt}] (343,253.17) .. controls (343,216.62) and (372.62,187) .. (409.17,187) .. controls (445.71,187) and (475.33,216.62) .. (475.33,253.17) .. controls (475.33,289.71) and (445.71,319.33) .. (409.17,319.33) .. controls (372.62,319.33) and (343,289.71) .. (343,253.17) -- cycle ;
            \draw    (229.33,253.17) -- (343,253.17) ;
            
            \draw (224,63.4) node [anchor=north west][inner sep=0.75pt]    {$r$};
            \draw (239,107.4) node [anchor=north west][inner sep=0.75pt]    {$\lambda $};
            \draw (218,138.4) node [anchor=north west][inner sep=0.75pt]    {$\theta $};
            \draw (180,89.4) node [anchor=north west][inner sep=0.75pt]    {$\frac{i}{h} \partial _{t} b_{r\theta }$};
            \draw (337,64.4) node [anchor=north west][inner sep=0.75pt]    {$c$};
            \draw (340,137.4) node [anchor=north west][inner sep=0.75pt]    {$\overline{c}$};
            \draw (324,111.4) node [anchor=north west][inner sep=0.75pt]    {$\theta $};
            \draw (347,100.4) node [anchor=north west][inner sep=0.75pt]    {$b_{\theta \theta }$};
            \draw (224,209.4) node [anchor=north west][inner sep=0.75pt]    {$r$};
            \draw (239,253.4) node [anchor=north west][inner sep=0.75pt]    {$\lambda $};
            \draw (218,284.4) node [anchor=north west][inner sep=0.75pt]    {$\theta $};
            \draw (193,232.4) node [anchor=north west][inner sep=0.75pt]    {$\frac{i}{h} b_{r\theta }$};
            \draw (337,210.4) node [anchor=north west][inner sep=0.75pt]    {$c$};
            \draw (340,283.4) node [anchor=north west][inner sep=0.75pt]    {$\overline{c}$};
            \draw (324,257.4) node [anchor=north west][inner sep=0.75pt]    {$\theta $};
            \draw (344,244.4) node [anchor=north west][inner sep=0.75pt]    {$\partial _{t} b_{\theta \theta }$};

            \end{tikzpicture}}    
            \caption{Derivative of the example Feynman diagram}
\end{figure}

These Feynman diagrams contribute to
\begin{gather*}
    \langle \exp(\frac{i}{h}S_{int})\frac{i}{h}\frac{\partial}{\partial t}S_{int} \cdot g\rangle_{B,P}
\end{gather*}

A term-by-term correspondence of Feynman diagrams gives, in this example,
\begin{gather*}
    \frac{\partial }{\partial t}\langle \exp(\frac{i}{h}S_{int})g\rangle_{B,P} =\langle
    \exp(\frac{i}{h}S_{int})\frac{i}{h}\frac{\partial}{\partial t}S_{int} \cdot g\rangle_{B,P}
\end{gather*}

This agrees with Proposition~\ref{deformation of expectation}. The next step is
to show that the right-hand side vanishes. This uses two Feynman-diagram
identities corresponding to Lemmas~\ref{Q_C lemma} and~\ref{Q_D lemma}.

Lemma \ref{Q_D lemma} implies
\begin{gather*}
    \langle  \exp(\frac{i}{h}S_{int})(-\frac{i}{h}c\bar c \lambda \partial_\theta \phi \cdot \frac{\partial }{\partial t}\phi + \frac{i}{h}\lambda \frac{\partial }{\partial t}\phi)g \rangle_{B,P} = 0
\end{gather*}

This can be interpreted as a Feynman-diagram identity. Since $\langle c
    \bar c \rangle_{B,P} = b_\theta^{-1}$, so, whenever there exists a $c \bar c$
loop in the diagram, we can delete it and multiply the diagram with
$b_\theta^{-1}$, which will keep the value of the diagram invariant.

Apply this operation to each diagram in the expansion of
\begin{gather*}
    \langle  \exp(\frac{i}{h}S_{int})\frac{i}{h}c\bar c \lambda b_\theta  \cdot \frac{\partial }{\partial t}\phi g\rangle_{B,P}
\end{gather*}

We can get part of the diagrams in the expansion of
\begin{gather*}
    \langle  \exp(\frac{i}{h}S_{int})\frac{i}{h} \lambda \cdot \frac{\partial }{\partial t}\phi g\rangle_{B,P}
\end{gather*}

The remaining diagrams in the above expansion comes from the $c \bar c$ terms
in the $\frac{i}{h}S_{int}$: This term is $c\bar c (\partial_\theta\phi - b_\theta)$, so the remaining diagrams can be obtained by
\begin{gather*}
    \langle \exp(\frac{i}{h}S_{int})(\frac{i}{h}c\bar c \lambda (\partial_\theta\phi - b_\theta) \cdot \frac{\partial }{\partial t}\phi)g \rangle_{B,P}
\end{gather*}

Combining the contributions gives
\begin{gather*}
    \langle  \exp(\frac{i}{h}S_{int})\frac{i}{h}c\bar c \lambda b_\theta  \cdot \frac{\partial }{\partial t}\phi g\rangle_{B,P} =\langle \exp(\frac{i}{h}S_{int})\frac{i}{h} \lambda
    \cdot \frac{\partial }{\partial t}\phi g\rangle_{B,P} -\langle \exp(\frac{i}{h}S_{int})(\frac{i}{h}c\bar c
    \lambda (\partial_\theta\phi - b_\theta) \cdot \frac{\partial }{\partial t}\phi)g \rangle_{B,P}
\end{gather*}

This recovers what we want.

Lemma \ref{Q_C lemma} implies
\begin{gather*}
    \langle  \exp(\frac{i}{h}S_{int})(\frac{i}{h}c\bar c \lambda \partial_\theta \phi \cdot \frac{\partial }{\partial t}\phi + c\bar c \frac{\partial }{\partial t}\partial_\theta\phi)g \rangle_{B,P} = 0
\end{gather*}

This can be interpreted as a Feynman-diagram identity. First consider all
Feynman diagrams in the expansion of
\begin{gather*}
    \langle  \exp(\frac{i}{h}S_{int})(-\frac{i}{h}c \lambda b_\theta \cdot \bar c\frac{\partial }{\partial t}\phi)g \rangle_{B,P}
\end{gather*}

Focus on the vertex associated with $-\frac{i}{h}c\lambda b_\theta$: $c$ must
be in an edge to $\bar c$ somewhere, and $\lambda$ must be in an edge to
$\lambda,r$ or $\theta$. This edge is nonzero only for the last case. In that
case, we remove $-\frac{i}{h}c \lambda b_\theta$ and the $\theta$, and then
replace it by $c$, and then connect it with the $\bar c$ mentioned above. This
operation still keeps the value of the diagram invariant, since $\langle
    \theta\lambda\rangle_{B,P} = ih\frac{1}{b_\theta}$. The operation is shown in
the figure 4 below.

\begin{figure}[ht]
    \scalebox{1.0}{
        \begin{tikzpicture}[x=0.75pt,y=0.75pt,yscale=-1,xscale=1]
            
            \draw    (123.33,62.17) -- (158.33,152.17) ;
            \draw  [dash pattern={on 0.84pt off 2.51pt}]  (158.33,152.17) -- (201.33,60.17) ;
            \draw    (225,107) .. controls (268.89,91.32) and (269.66,118.77) .. (312.04,104.61) ;
            \draw [shift={(313.33,104.17)}, rotate = 160.85] [color={rgb, 255:red, 0; green, 0; blue, 0 }  ][line width=0.75]    (10.93,-3.29) .. controls (6.95,-1.4) and (3.31,-0.3) .. (0,0) .. controls (3.31,0.3) and (6.95,1.4) .. (10.93,3.29)   ;
            \draw  [dash pattern={on 0.84pt off 2.51pt}]  (413,91) .. controls (438.33,114.17) and (488.33,117.17) .. (512.33,89.17) ;
            
            \draw (114,68.4) node [anchor=north west][inner sep=0.75pt]    {$\theta $};
            \draw (134,140.4) node [anchor=north west][inner sep=0.75pt]    {$\lambda $};
            \draw (173,138.4) node [anchor=north west][inner sep=0.75pt]    {$c$};
            \draw (209,60.4) node [anchor=north west][inner sep=0.75pt]    {$\overline{c}$};
            \draw (403,98.4) node [anchor=north west][inner sep=0.75pt]    {$c$};
            \draw (515,100.4) node [anchor=north west][inner sep=0.75pt]    {$\overline{c}$};
            \draw (339,91.4) node [anchor=north west][inner sep=0.75pt]    {$ih\cdot \frac{1}{b_{\theta }}$};

            \end{tikzpicture}}    
            \caption{Feynman-diagram operation}\label{Feynman diagram operation}
\end{figure}

As we apply this operation, the $\theta$ can be from 2 places: the vertices
$\bar c\frac{\partial }{\partial t}\phi g$, or the vertices in $\frac{i}{h}S_{int}$.

In the first case, we obtain a Feynman diagram in the expansion of the following
expression (recall that $\partial_\theta g=0$):
\begin{gather*}
    \langle  \exp(\frac{i}{h}S_{int})c\bar c\frac{\partial }{\partial t}(\partial_\theta \phi) g \rangle_{B,P}
\end{gather*}

In the second case, we obtain a Feynman diagram in the expansion of
\begin{gather*}
    \langle  \exp(\frac{i}{h}S_{int})(\frac{i}{h}c \lambda (\partial_\theta \phi - b_\theta)\cdot \bar c \frac{\partial }{\partial t}\phi )g \rangle_{B,P}
\end{gather*}

Combining the two cases gives
\begin{gather*}
    \langle  \exp(\frac{i}{h}S_{int})(-\frac{i}{h}c \lambda b_\theta \cdot \bar c\frac{\partial }{\partial t}\phi)g \rangle_{B,P} =\langle \exp(\frac{i}{h}S_{int})c\bar c\frac{\partial }{\partial t}(\partial_\theta \phi) g \rangle_{B,P} +\langle
    \exp(\frac{i}{h}S_{int})(\frac{i}{h}c \lambda (\partial_\theta \phi - b_\theta)\cdot \bar c \frac{\partial }{\partial t}\phi )g \rangle_{B,P}
\end{gather*}

This recovers what we want.

Now, we take the sum of the results of the 2 lemmas, we get
\begin{gather*}
    0 = \langle  \exp(\frac{i}{h}S_{int})\frac{i}{h} \lambda \cdot \frac{\partial }{\partial t}\phi g\rangle_{B,P} +\langle \exp(\frac{i}{h}S_{int})c\bar c\frac{\partial }{\partial t}(\partial_\theta \phi) g \rangle_{B,P}\\ =\langle
    \exp(\frac{i}{h}S_{int})\frac{i}{h}\frac{\partial}{\partial t}S_{int} \cdot g\rangle_{B,P}\\ =\frac{\partial }{\partial t}\langle \exp(\frac{i}{h}S_{int})g\rangle_{B,P}
\end{gather*}

The last step also follows directly from the Feynman-diagram
diagrams.

In the non-Abelian case, the diagrammatic identities are more complicated, but
they are obtained in the same way from Lemmas~\ref{Q_C lemma}
and~\ref{Q_D lemma}. We omit this diagrammatic reformulation because the
algebraic proof above already covers that case.

\section{Appendix: Independence of the Choice of Coordinates}\label{Appendix : Independence of the Choice of Coordinates}

One question remains: is the algebraically defined asymptotic expansion
independent of the choice of coordinates?

We answer this question by showing that the normalized expansion is invariant
under a smooth deformation of coordinates that preserves the hypotheses.

Under the coordinate deformation generated by $W^k\partial_k$, the functions,
density, and $\mathfrak g$-action vary as follows:
\begin{gather*}
    \frac{\partial}{\partial t} f = W^k\partial_k f, \quad \frac{\partial}{\partial t} g =\partial_k ( W^k g), \quad \frac{\partial}{\partial t} D_a^l = W^k\partial_k D_a^l - D_a^k\partial_k W^l, \quad \frac{\partial}{\partial t} \phi^a = W^k\partial_k \phi^a
\end{gather*}

This is stronger than the condition given in the theorem
\ref{algebraicindependence}, so we get
\begin{gather*}
    \frac{\partial}{\partial t}(\frac{\det P}{\sqrt{\det B}} \langle \exp(\frac{i}{h}S_{int}) g
    \rangle_{B,P}) = 0
\end{gather*}

\section{Appendix: Formal Integral Interpretation of the Lemmas and Proofs}\label{Appendix: Integral Interpretation of the Lemmas and Proofs}
This appendix gives a formal integral interpretation of the algebraic proof.
In field theory the analogous expressions are formal path integrals; no
analytic measure-theoretic claim is made here. For
simplicity, we write $DxD\lambda DcD\bar c =
    dX^1...dX^{n+m}dc^1...dc^md\bar{c}_1...d\bar{c}_m$, and $L_a^b = D_a^k\partial_k\phi^b$.

The Berezin integral of a total derivative with respect to any $\bar c_a$
vanishes. Since $Q_D=M_a\overleftarrow\partial/\partial\bar c_a$, this gives
\begin{gather*}
    0 = \int Q_D(\exp(\frac i h S)\bar c _a \frac{\partial }{\partial t}\phi^a)g DxD\lambda DcD\bar c\\ =\int \exp(\frac i h S)(Q_D(\frac i h
    S)\bar c _a \frac{\partial }{\partial t}\phi^a + Q_D(\bar c_a\frac{\partial }{\partial t}\phi^a))g DxD\lambda DcD\bar c\\ =\int \exp(\frac i h S)(-\frac i h
    \lambda_b c^c D_c^k\partial_k\phi^b \cdot \bar c _a \frac{\partial }{\partial t}\phi^a + \frac i h \lambda_a\frac{\partial }{\partial t}\phi^a )g DxD\lambda DcD\bar c
\end{gather*}

When $g$ is a $\mathfrak g$-invariant density, i.e., for all $a$,
\begin{gather*}
    \partial_k(D_a^k g) = 0
\end{gather*}

In this case one may similarly argue, formally, that for any $\alpha$,
\begin{gather}\label{general Q_C identity}
    0 = \int Q_C(\alpha)g DxD\lambda DcD\bar c
\end{gather}

Therefore we will get,
\begin{gather}\label{special Q_C identity}
    \begin{split}
        0 =& \int Q_C(\exp(\frac i h S)\bar c _a \frac{\partial }{\partial t}\phi^a)g DxD\lambda DcD\bar c\\ =&\int \exp(\frac i h S)(Q_C(\frac i h
        S)\bar c _a \frac{\partial }{\partial t}\phi^a + Q_C(\bar c_a\frac{\partial }{\partial t}\phi^a))g DxD\lambda DcD\bar c\\ =&\int \exp(\frac i h S)(\frac i h
        \lambda_b c^c D_c^k\partial_k\phi^b \cdot \bar c _a \frac{\partial }{\partial t}\phi^a + c^a \bar c_b D_a^k \partial_k\frac{\partial \phi^b}{\partial t})g DxD\lambda DcD\bar c
    \end{split}
\end{gather}

Adding the $Q_C$ and $Q_D$ identities gives
\begin{gather*}
    0 = \int \exp(\frac i h S)(c^a \bar c_b D_a^k \partial_k\frac{\partial \phi^b}{\partial t} + \frac i h \lambda_a\frac{\partial \phi^a}{\partial t} )g DxD\lambda DcD\bar c\\ =\frac{\partial}{\partial t}\int \exp(\frac i h S) g DxD\lambda DcD\bar c
\end{gather*}

That is what we want.

For completeness, we now verify~\eqref{special Q_C identity} directly rather
than proving the general identity~\eqref{general Q_C identity}.

Formal integration by parts gives

\begin{gather*}
    \int \exp(\frac i h S)( c^a \bar c _b D_a^k\partial_k\frac{\partial \phi^b}{\partial t})g DxD\lambda DcD\bar c\\ =\int (- c^a \bar c _b \frac{\partial \phi^b}{\partial t})\partial_k(\exp(\frac i h S) g D_a^k)DxD\lambda DcD\bar c\\ =\int \bar c_b \frac{\partial \phi^b}{\partial t} g c^a D_a^k \partial_k \exp(\frac i h S)DxD\lambda DcD\bar c\\
\end{gather*}

Notice that
\begin{gather*}
    c^a D_a^k\partial_k \left(\frac i h S\right)
    =c^a D_a^k\partial_k\left(\frac i h f + c^b\bar c_c D_b^l \partial_l \phi^c + \frac i h \lambda_b\phi^b \right)\\
    =c^a c^b\bar c_c D_a^k\partial_k\left(D_b^l\partial_l\phi^c\right)
    + \frac i h \lambda_b c^a D_a^k\partial_k \phi^b.
\end{gather*}

Because $c^ac^b=-c^bc^a$, only the antisymmetric part of the first term
contributes. Using $[D_a,D_b]=f_{ab}^dD_d$ and the invariance of $f$, we obtain
\begin{gather*}
    c^a D_a^k\partial_k (\frac i h S)\\ =c^a c^b \bar c _c \frac 1 2 f_{ab}^d D_d^k\partial_k \phi^c + \frac i h \lambda_b c^a D_a^k\partial_k \phi^b
\end{gather*}

So
\begin{gather*}
    \int  \bar c_b  \frac{\partial }{\partial t}\phi^b g c^a D_a^k \partial_k \exp(\frac i h S)DxD\lambda DcD\bar c\\ =\int \bar c_b \frac{\partial }{\partial t}\phi^b g \exp(\frac i h S)(c^a c^e \bar c _c \frac 1 2 f_{ae}^d D_d^k\partial_k \phi^c + \frac i h \lambda_e c^a D_a^k\partial_k \phi^e)DxD\lambda DcD\bar c\\
\end{gather*}

Write $L_a^b=D_a^k\partial_k\phi^b$. Since the ghost-dependent part of the
exponent is $c^a\bar c_bL_a^b$, the Berezin integral satisfies
\begin{gather*}
    \int \exp(c^a \bar c_b L_a^b)( c^c \bar c_d) Dc D\bar c = \int \exp(c^a \bar c_b L_a^b) Dc D\bar c \cdot (L^{-1})^c_d\\
    \int \exp(c^a \bar c_b L_a^b)( c^c \bar c_d c^e\bar c_f) Dc D\bar c = \int \exp(c^a \bar c_b L_a^b) Dc D\bar c \cdot ((L^{-1})^c_d (L^{-1})^e_f -  (L^{-1})^c_f (L^{-1})^e_d)\\
\end{gather*}

So
\begin{gather*}
    \int \bar c_b  \frac{\partial }{\partial t}\phi^b g \exp(\frac i h S)(c^a c^e \bar c _c \frac 1 2 f_{ae}^d D_d^k\partial_k \phi^c )DxD\lambda DcD\bar c\\ =\int \exp(\frac i h S) \bar c_b c^a c^e
    \bar c _c \frac 1 2 f_{ae}^d L_d^c \frac{\partial }{\partial t}\phi^b g DxD\lambda DcD\bar c\\ =\int \exp(\frac i h S)(-1)((L^{-1})_b^a
    (L^{-1})_c^e - (L^{-1})_b^e (L^{-1})_c^a)\cdot \frac 1 2 f_{ae}^d L_d^c \frac{\partial }{\partial t}\phi^b g DxD\lambda DcD\bar c\\ =\int \exp(\frac i h
    S)(-\frac{1}{2}(L^{-1})_b^a f_{ad}^d + \frac{1}{2} (L^{-1})_b^e f^a_{ae})
    \frac{\partial }{\partial t}\phi^b g DxD\lambda DcD\bar c\\ =0
\end{gather*}

The last equality uses the unimodularity of $\mathfrak g$, namely
$f_{ad}^{d}=f_{ae}^{a}=0$ (after relabeling dummy indices).

In total, we get
\begin{gather*}
    \int \exp(\frac i h S)( c^a \bar c _b D_a^k\partial_k\frac{\partial \phi^b}{\partial t})g DxD\lambda DcD\bar c\\ =\int \bar c_b \frac{\partial \phi^b}{\partial t} g \exp(\frac i h S)(c^a c^e \bar c _c \frac 1 2 f_{ae}^d D_d^k\partial_k \phi^c + \frac i h \lambda_e c^a D_a^k\partial_k \phi^e)DxD\lambda DcD\bar c\\ =\int \bar c_b \frac{\partial \phi^b}{\partial t} g \exp(\frac i h S)( \frac i h \lambda_e c^a D_a^k\partial_k \phi^e)DxD\lambda DcD\bar c\\ =-\int \exp(\frac i h S)\frac i h
    \lambda_b c^c D_c^k\partial_k\phi^b \cdot \bar c _a \frac{\partial }{\partial t}\phi^a g DxD\lambda DcD\bar c
\end{gather*}

That is just \ref{special Q_C identity}.

\bibliographystyle{unsrt}

\bibliography{ref.bib}

\end{document}